\newif\ifarxiv
\begin{document}
\title{Control strategies\\ for off-line testing of timed systems}
\author{L\'eo Henry \and Thierry J\'eron \and Nicolas Markey}
\institute{Univ. Rennes, INRIA \& CNRS, Rennes (France)}

\maketitle

\begin{abstract}
Partial observability and controllability are two well-known issues in
test-case synthesis for interactive systems.  We~address the problem
of partial control in the synthesis of test cases from timed-automata
specifications.  Building on the $\tioco$ timed testing framework,
we~extend a previous game interpretation of the test-synthesis problem
from the untimed to the timed setting.  This~extension requires a deep
reworking of the models, game interpretation and test-synthesis
algorithms.  We~exhibit strategies of a game that tries to minimize
both control losses and distance to the satisfaction of a test
purpose, and prove they are winning under some fairness assumptions.
This~entails that when turning those strategies into test cases,
we~get properties such as soundness and exhaustiveness of the test
synthesis method.

\end{abstract}

\section{Introduction} % (fold)
\label{sec:introduction}

% Timed systems
Real-time interactive systems are systems interacting with their
environment and subject to timing constraints.  Such systems are
encountered in many contexts, in particular in critical applications
such as transportation, control of manufacturing systems,~etc.  Their
correctness is then of prime importance, but it is also very
challenging due to multiple factors: combination of discrete and
continuous behaviours, concurrency aspects in distributed systems,
limited observability of behaviours, or partial controllability of
systems.

%General context of conformance testing
One of the most-used validation techniques in this context is testing,
with variations depending on the design phases.  Conformance testing
is one of those variations, consisting in checking whether a real
system correctly implements its specification.  Those real systems are
considered as black boxes, thereby offering only partial
observability, for various reasons (\eg~because sensors cannot observe
all actions, or because the system is composed of communicating
components whose communications cannot be all observed, or again
because of intellectual property of peer software).  Controllability is
another issue when the system makes its own choices upon which the
environment, and thus
the tester, have a limited control.
One of the most-challenging activities in this context is the design
of %those 
test cases that, when executed on the real system, should
produce meaningful information about the conformance of the system at
hand with respect to its specification.  Formal models and methods are
a good candidate to help this test-case synthesis~\cite{Tre96}.

\looseness=-1
Timed Automata~(TA)~\cite{AD94} form a  class of model for the
specification of timed reactive systems.  It consists of automata
equipped with real-valued clocks where transitions between locations
carry actions, are guarded by constraints on clock values, and can
reset clocks.  TAs~are also equipped with invariants that constrain
the sojourn time in locations.  TAs~are popular in particular because
reachability of a location is decidable using symbolic representations
of sets of configurations by zones.  In the context of testing, it~is
adequate to refine TAs by explicitly distinguishing (controllable)
inputs and (uncontrollable) outputs, giving rise to TAIOs 
(Timed Automata with Inputs and Outputs). In the following, this model will be used for
most testing artifacts, namely specifications, implementations, and
test cases.  Since completeness of testing is hopeless in practice,
it~is helpful to rely on test purposes that describe those
behaviours that need to be tested because they are subject to errors.
In~our formal testing framework, an extension of TAIOs called Open TAIOs
(or~OTAIOs) is used to formally specify those behaviors.  OTAIOs~play
the role of observers of actions and clocks of the specification: they
synchronize on actions and clock resets of the specification (called
observed clocks), and control their proper clocks.  The~formal testing
framework also requires to formally define conformance as a relation
between specifications and their possible implementations.
In~the timed setting, the~classical~$\tioco$ relation~\cite{KT09} 
states that, after a timed observable trace of the
specification, the outputs and delays of the implementation should be
specified.

Test-case synthesis from TAs has been extensively studied during the
last 20 years
(see~\cite{COG98,CKL98,SVD01,EDK02,NS03,BB04,LMN04,KT09}, to cite
a~few).  As~already mentioned, one of the difficulties is partial
observation.  In~off-line testing, where the test cases are first
computed, stored, and later executed on the implementation, the tester
should anticipate all specified outputs after a trace.  In~the untimed
framework, this is tackled by determinization of the specification.
Unfortunately, this is not feasible for TAIO specifications since
determinization is not possible in general~\cite{AD94,Fin06}.  The
solution was then either to perform on-line testing where subset
construction is made on the current execution trace, or to restrict to
determinizable sub-classes.  More recently, some advances were
obtained in this context~\cite{BJSK12} by the use of an approximate
determinization using a game approach~\cite{BSJK15} that preserves
$\tioco$ conformance.  Partial observation is also dealt with
by~\cite{DLLMN10} with a variant of the TA model where observations
are described by observation predicates composed of a set of locations
together with clock constraints.  Test cases are then synthesized as
winning strategies, if~they exist, of~a~game between the specification
and its environment that tries to guide the system to satisfy the test
purpose.

\looseness=-1
The problem of test synthesis is often informally presented as a game
between the environment and the system (see \eg~\cite{Yan04}).
But very few papers effectively
take into account the controllability of the system.
In~the~context of testing for timed-automata models~\cite{DAVID-DATE-2008}, proposes a game approach where test cases are winning strategies of a reachability game.
But this is restricted to deterministic models and controllability is not really taken into account.
In~fact, like in~\cite{DLLMN10}, 
 the game is abandonned when control is lost, and it is suggested to modify the test purpose in this case.
This~is mitigated in~\cite{DAVID-MBT-2008} with cooperative strategies,
which rely on the cooperation of the system under test to win the game.
A~more convincing approach to the control problem is the one
of~\cite{Ram98} in the untimed setting, unfortunately a quite
%MODIF: confidential --> little known.
little-known work.
The~game problem consists in satisfying the test purpose (a~simple
sub-sequence), while trying to avoid control losses occurring when
outputs offered by the system leave this behaviour.  The~computed
strategy is based on a rank that measures both the distance to the
goal and the controls losses.

The current paper adapts the approach proposed in~\cite{Ram98}
to the timed context using the framework developed in~\cite{BJSK12}.
Compared to~\cite{Ram98}, the model of TA is much more complex than transition systems, the test purposes are
also much more powerful than simple sub-sequences, thus even if the
approach is similar, the game has to be completely revised.  Compared
to~\cite{DLLMN10}, our model is a bit different since we do not rely on
observation predicates, but partial observation comes from internal
actions and choices.  We~do not completely tackle non-determinism since
we assume determinizable models at some point.
In~comparison, \cite{DLLMN10}~avoids determinizing~TAs,
relying on the determinization of a finite state model,
thanks to a projection on a finite set of observable predicates.
Cooperative strategies of~\cite{DAVID-MBT-2008} have similarities with our fairness assumptions, but their models are assumed deterministic.
Our approach takes controllability into account
in a more complete and practical way 
with the reachability game and rank-lowering strategies.

The paper is organized as follows. Chapter~2 introduces basic models:
TAs, TAIOs and their open counterparts OTAs, OTAIOs, and then timed
game automata (TGA).  Chapter~3 is dedicated to the testing framework
with hypothesis on models of testing artifacts, the conformance
relation and the construction of the \emph{objective-centered tester}
that denotes both non-conformant traces and the goal to reach
according to a test purpose.  Chapter~4 constitutes the core of the
paper. The test synthesis problem is interpreted as a game on the
objective-centered tester.  Rank-lowering strategies are proposed as
candidate test cases, and a fairness assumption is introduced to make
such strategies win.  Finally properties of test cases with respect to
conformance are proved.

\ifarxiv\else
By lack of space, not all proofs could be included. They~can be found in~\cite{arxiv}.
\fi

% section introduction (end)

%\section{Models} % (fold)
\section{Timed automata and timed games} % (fold)
\label{sec:models}
In this section, we introduce our models for timed systems and
for concurrent games on these objects, along with some useful notions
and operations.

\subsection{Timed automata with inputs and outputs} % (fold)
\label{sub:timed_automata}
Timed automata (TAs)~\cite{AD94} are one of the most widely-used classes
of models for reasoning about computer systems subject to real-time
constraints. Timed automata are finite-state automata augmented with
real-valued variables (called~\emph{clocks}) to constrain the
occurrence of transitions along executions.
In~order to adapt these models to the testing framework,
we~consider TAs with inputs and outputs (TAIOs), in which the
alphabet is split between input, output and internal actions
(the~latter being used to model partial observation). We~present the
\emph{open} TAs (and open TAIOs)~\cite{BJSK12}, which
allow the models to observe and synchronize with
a set of non-controlled clocks.

Given a finite set of \emph{clocks}~$X$, a~\emph{clock valuation}
over~$X$ is a function $v\colon X \to \bbR_{\geq 0}$.
We~note $\overline{0}_X$ (and often omit to mention~$X$ when clear
from the context) for the valuation assigning $0$ to all clocks
in~$X$.  Let~$v$ be a clock valuation,
for any~$t\in \bbR_{\geq 0}$, we~denote with $v + t$ the valuation mapping each clock $x\in X$
to~$v(x)+t$, 
and for a subset~$X'\subseteq X$, we~write
$v_{[X'\leftarrow 0]}$ for the valuation mapping all clocks in~$X'$
to~$0$, and all clocks in~$X\setminus X'$ to their values in~$v$.

A \emph{clock constraint} is a finite conjunction of atomic
constraints of the form $x\sim n$ where $x\in X$, $n\in\bbN$, and
$\mathord\sim \in\{\mathord{<},\mathord{\leq}, \mathord{=},
\mathord{\geq},\mathord{>}\}$. That a valuation~$v$ satisfies a clock
constraint~$g$, written~$v\models g$, is defined in the obvious way.
We~write $\ClC(X)$ for the set of clock constraints over~$X$.

\begin{defn}{}{OTAIO}
  An \emph{open timed automaton}~(OTA) is a tuple\footnote{For this and the following definitions, we~may omit to mention superscripts when the corresponding automaton is clear from the context.}
  $\mA=(L^\mA,l_0^\mA,\Sigma^\mA,\penalty1000
  X_p^\mA\uplus X_o^\mA,\penalty1000
  I^\mA,E^\mA)$ where:
\begin{itemize}
  \item $L^\mA$ is a finite set of \emph{locations}, with $l_0^\mA\in
    L^\mA$ the \emph{initial location},
  \item $\Sigma^\mA$ is a finite alphabet,
  \item $X^\mA=X_p^\mA\uplus X_o^\mA$ is a finite set of clocks,
    partitionned into \emph{proper clocks}~$X_p^\mA$ and
    \emph{observed clocks}~$X_o^\mA$; only proper clocks may be reset
    along transitions. 
  \item $ I^\mA\colon L^\mA \rightarrow \ClC(X^\mA)$ assigns invariant
    constraints to locations.
  \item $E^\mA \subseteq L^\mA \times \ClC(X^\mA) \times \Sigma^\mA \times
    2^{X_p^\mA} \times L^\mA$ is a finite set of
    \emph{transitions}.
    For $e=(l,g,a,X',l')\in E^\mA$, we~write $\act(e)=a$.
\end{itemize}
An Open Timed Automaton with Inputs and Outputs (OTAIO) is an OTA in which
$\Sigma^\mA=\Sigma_?^\mA\uplus \Sigma_!^\mA \uplus \Sigma_\tau^\mA$ is
the disjoint union of \emph{input actions} in~$\Sigma_?^\mA$ (noted
$?a$, $?b$,~...), \emph{output actions} in~$\Sigma_!^\mA$ (noted $!a$,
$!b$,~...), and \emph{internal actions} in~$\Sigma_\tau^{\mA}$ (noted
$\tau_1$, $\tau_2$, ...)  We~write $\Sigma_{\obs} =
\Sigma_?\uplus\Sigma_!$ for the alphabet of \emph{observable} actions.
Finally, a Timed Automaton~(TA) (resp. a Timed Automaton with Inputs
and Outputs~(TAIO)) is an~OTA (resp. an~OTAIO) with no observed
clocks.
\end{defn}
TAIOs will be sufficient to model most objects, but the ability of
OTAIOs to observe other clocks will be essential for test
purposes (see~Section~\ref{sub:test_context}), which need to
synchronize with the specification.

%\begin{defn}{}{SemOTAIO}
Let $ \mA = (L, l_0, \Sigma,
X_p\uplus X_o,I,E)$ be an OTA. Its~\emph{semantics} is
defined as an infinite-state transition system $\mT^{\mA} = (S^\mA, s_0^\mA, \Gamma^\mA,
\rightarrow^\mA)$ where:
\begin{itemize}
\item $S^\mA = \{(l,v)\in L \times \bbR^{X}_{\geq 0 }\mid v\models I(l)\}$ is the
  (infinite) set of \emph{configurations},
  with initial configuration $s_0^\mA = (l_0, \overline{0}_X)$.
\item $\Gamma^\mA = \bbR_{\geq 0} \uplus (E\times 2^{X_o})$ is
  the set of \emph{transitions labels}.
\item $\mathord{\rightarrow^\mA} \subseteq S^\mA \times \Gamma^\mA \times S^\mA$
is the
  \emph{transition relation}. It~is defined as the union of
  \begin{itemize}
  \item the set of transitions corresponding to \emph{time elapses}:
    it contains all triples $((l,v),\delta,(l',v'))\in S^\mA\times
    \bbR_{\geq 0}\times S^\mA$ for which $l=l'$ and
    $v'=v+\delta$. By~definition of~$S^\mA$, both $v$ and~$v'$ satisfy the
    invariant~$I(l)$.
    
  \item the set of transitions corresponding to \emph{discrete moves}:
    it~contains all triples $((l,v),(e,X'_o),(l',v'))\in S^\mA\times
    (E\times 2^{X_o})\times S^\mA$ such that, writing
      $e=(m,g,a,X'_p,m')$, it~holds $m=l$, $m'=l'$, $v\models g$, and
      $v'=v_{[X'_p\cup X'_o\leftarrow 0]}$. Again, by~definition, $v\models I(l)$
      and $v'\models I(l')$.
  \end{itemize}
\end{itemize}
%\end{defn}

An OTA has no control over its observed clocks,
the intention being to synchronize them later in a product (see Def.~\ref{ProductOTAIO}).
Hence, when a discrete transition is taken, any~set~$X'_o$ of observed clocks may be
reset. 
When~dealing with plain TAs, where $X_o$ is~empty, we~may
write $(l,v)\xrightarrow{e}(l',v')$ in place of
$(l,v)\xrightarrow{(e,\emptyset)}(l',v')$.

A~\emph{partial run}
of $\mA$ is a (finite or infinite) sequence of
transitions in $\mT^{\mA}$
$\rho=((s_i,\gamma_i,s_{i+1}))_{1\leq i<n}$, with
$n\in\bbN\cup\{+\infty\}$.
We~write $\first(\rho)$ for~$s_1$ and, when $n\in\bbN$, $\last(\rho)$
for~$s_{n}$. A~\emph{run} is a partial run starting in the initial
configuration~$s_0^\mA$.  
The~duration of~$\rho$ is $\duration(\rho)=\sum_{\gamma_i\in\bbR_{\geq 0}} \gamma_i$.
In the sequel, we~only consider TAs in which any infinite run has
infinite duration.
We~note~$\Ex(\mA)$
for the set of runs of~$\mA$
and $\pEx(\mA)$ the subset of partial runs.

State~$s$ is \emph{reachable} from state~$s'$
when there exists a
partial run from~$s'$ to~$s$. We~write $\Reach(\mA,S')$
for the set of states that are reachable
from some state in~$S'$, and $\Reach(\mA)$ for
$\Reach(\mA, \{s_0^\mA\})$.

The~\emph{(partial) sequence} associated with a (partial) run~$\rho=((s_i,\gamma_i,s'_i))_i$ is
$\seq(\rho)=(\proj(\gamma_i))_i$, where
$\proj(\gamma)=\gamma$ if $\gamma\in\bbR_{\geq 0}$, and 
$\proj(\gamma)=(a,X'_p\cup X'_o)$ if $\gamma=((l,g,a,X'_p,l'),X'_o)$.
We~write  $\pSeq(\mA)= \proj(\pEx(\mA))$ and $\Seq(\mA)=\proj(\Ex(\mA))$ for the sets of (partial) sequences of~$\mA$.
We~write $s\xrightarrow{\mu}s'$ when there exists a (partial) finite
run~$\rho$ such that $\mu=\proj(\rho)$,
$\first(\rho)=s$ and $\last(\rho)=s'$, and~write $\duration(\mu)$ for $\duration(\rho)$.
We~write $s\xrightarrow{\mu}$ when $s\xrightarrow{\mu} s'$ for some~$s$.

If $\mA$ is a TAIO, 
the \emph{trace} of a (partial) sequence corresponds to what can be
observed by the environment, namely delays and observable actions. The
trace of a sequence is the limit of the following inductive definition, for
$\delta_i\in\mathbb{R}_{\geq0}$, $a\in\Sigma_{obs}$,
$\tau\in\Sigma_\tau$, $X' \subseteq X$, and a~partial sequence~$\mu$:
\begin{xalignat*}1
  \Trace(\delta_1...\delta_k) &=   \textstyle\sum^k_{i=1}\delta_i
\qquad\text{ (in particular $\Trace(\epsilon)=0$)}
  \\
  \Trace(\delta_1...\delta_k.(\tau,X').\mu)&= \textstyle(\sum^k_{i=1}\delta_i)\cdot \Trace(\mu)
  \\
  \Trace(\delta_1...\delta_k.(a,X').\mu)&= \textstyle(\sum^k_{i=1}\delta_i)\cdot a \cdot \Trace(\mu)
\end{xalignat*}
We note $\Traces(\mA)=\Trace(\Seq(\mA))$ the set of traces
corresponding to runs of~$\mA$ and $\pTraces(\mA)$ the subset of traces
corresponding to partial runs.
Two~OTAIOs are said to be \emph{trace-equivalent} %(or, for short, equivalent)
if they have the same sets of traces.  We~furthermore define, for an
OTAIO~$\mA$, a~trace~$\sigma$ and a configuration~$s$:
\begin{itemize}
\item $\mA\after \sigma = \{ s\in S \mid \exists \mu\in \Seq(\mA),
  s_0\xrightarrow{\mu}s\wedge \Trace(\mu)=\sigma\}$ is the set of all
  configurations that can be reached when~$\sigma$ has been
  observed from~$s_0^\mA$.
\item $\enab(s)=\{ e\in E^\mA\mid s\xrightarrow{e}\}$ is the set of
  transitions enabled in~$s$.
\item $\elapse(s)=\{t\in\mathbb{R}_{\geq 0} \mid \exists
  \mu\in(\mathbb{R}_{\geq 0}\cup(\Sigma_\tau\times 2^X))^{*},
  s\xrightarrow{\mu} \wedge  \duration(\mu) = t\}$ is the set of delays
  that can be observed from location $s$ without any observation.
\item $\Sout(s) = \{a\in \Sigma_!\mid \exists e\in \enab(s), \act(e)=a\}\cup\elapse(s)$
  is the set of possible
  outputs and delays that can be observed from~$s$. For~$S' \subseteq
  S$, we note $\Sout(S')= \bigcup_{s\in S'} \Sout(s)$.
\item $\Sin(s)=\{a\in \Sigma_!\mid \exists e\in \enab(s), \act(e)=a\}$
  is the set of possible inputs that can be
  proposed when arriving in~$s$. For~$S' \subseteq S$, we~note
  $\Sin(S')= \bigcup_{s\in S'} \Sin(s)$.
\end{itemize}

We now define some useful sub-classes of OTAIOs.
%\begin{defn}{}{OTAIOsubclasses}
  An OTAIO~$\mA$ is said
  \begin{itemize}
  \item \emph{deterministic} if for all $\sigma\in Traces(\mA)$,
    $\mA\after \sigma$ is a singleton;
    \item \emph{determinizable} if there exists a trace-equivalence
      deterministic OTAIO;
    \item \emph{complete} if $S=L\times\bbR_{\geq 0}^X$
      (i.e., all invariants are always true)
      and for any~$s\in S$ and
      any~$a\in\Sigma$, it~holds $s\xrightarrow{a,X'}$ for
      some~$X'\subseteq X$;
    \item \emph{input-complete} if for any $s\in\Reach(\mA)$,
    $\Sin(s)=\Sigma_?$;
    \item \emph{non-blocking} if for any $s\in\Reach(\mA)$ and any
      non-negative real~$t$, there is a partial run~$\rho$ from~$s$
      involving no input actions (i.e., $\proj(\rho)$ is a sequence
      over $\bbR_{\geq 0} \cup (\Sigma_!\cup\Sigma_\tau)\times 2^X$)
      and such that $\duration(\rho)=t$;
    \item \emph{repeatedly observable} if for any $s\in\Reach(\mA)$,
      there exists a partial run~$\rho$ from~$s$
      such that $\Trace(\rho)\notin\bbR_{\geq 0}$.
  \end{itemize}
%\end{defn}

The product of two OTAIOs extends the classical product of~TAs.
\begin{defn}{}{ProductOTAIO}\label{ProductOTAIO}
  Given two OTAIOs $\mA = (L^\mA, l_0^\mA,
  \Sigma_?\uplus\Sigma_!\uplus\Sigma_\tau,X_p^\mA\uplus
  X_o^\mA,I^\mA,E^\mA)$ and $\mB = (L^\mB, l_0^\mB,
  \Sigma_?\uplus\Sigma_!\uplus\Sigma_\tau,X_p^\mB\uplus X_o^\mB,I^\mB,E^\mB)$
  over the same alphabets, their \emph{product} is the OTAIO
  $\mA\times\mB = (L^\mA\times L^\mB, (l_0^\mA,l_0^\mB),
  \Sigma_?\uplus\Sigma_!\uplus\Sigma_\tau,(X_p^\mA\cup X_p^\mB)\uplus((X^\mA_o\cup
  X^\mB_o)\setminus(X_p^\mA\cup X_p^\mB)),I,E)$ where
  $I\colon (l_1,l_2)\mapsto
  I^\mA(l_1)\wedge I^\mB(l_2)$ and $E$
  is the (smallest) set such that for each
  $(l^1,g^1,a,X_p'^1,l'^1)\in E^\mA$ and $(l^2,g^2,a,X_p'^2,l'^2)\in
  E^\mB$, $E$~contains $((l^1,l^2),g^1\wedge g^2, a, X_p'^1\cup
  X_p'^2, (l'^1,l'^2))$.
\end{defn}
The product of two OTAIOs corresponds to the intersection of the sequences of the orginal OTAIOs, \ie~$\Seq(\mA\times\mB)=\Seq(\mA)\cap\Seq(\mB)$~\cite{BSJK15}.

% subsection timed_automata (end)

\subsection{Timed games} % (fold)
\label{sub:timed_games}

We introduce timed game automata~\cite{AMPS98},
which we later use to turn the test artifacts into games between 
the tester (controlling the environment) and the implementation, on an arena constructed from the specification.
\begin{defn}{}{TGA}
A \emph{timed game automaton} (TGA) is a timed automaton $ \g = (L,
l_0, \Sigma_c\uplus\Sigma_u,X,I,E)$ where $\Sigma=\Sigma_c\uplus
\Sigma_u$ is partitioned into actions that are
controllable~($\Sigma_c$) and uncontrollable~($\Sigma_u$) by the
player.
\end{defn}
All the notions of runs and sequences 
defined previously for TAs are extended to TGAs,
with the interpretation of $\Sigma_c$ as inputs and $\Sigma_u$ as outputs.  
\begin{defn}{}{Strategy}
Let $ \g = (L, l_0, \Sigma_c\uplus\Sigma_u,X, I,E)$ be a
TGA. A~\emph{strategy} for the player is a partial function $f\colon
Ex(\g) \to \bbR_{\geq0}\times(\Sigma_c\cup\{\bot\})\setminus\{(0,\bot)\}$
such that for any finite run~$\rho$, letting $f(\rho)=(\delta,a)$,
$\delta \in \elapse(\last(\rho))$ is a possible delay from $\last(\rho)$,
and there is an
$a$-transition available from the resulting configuration (unless
$a=\bot$).
\end{defn}

Strategies give rise to sets of runs of~$\g$, defined as follows:
\begin{defn}{}{Outcome}
  Let $ \g = (L, l_0, \Sigma,X,I,E)$ be a TGA, $f$ be a strategy over
  $\g$, and $s$ be a configuration.  The~\emph{set of outcomes} of~$f$
  from~$s$, noted $\Outcome(s,f)$, is the smallest subset of partial
  runs starting from~$s$ containing the empty partial run from~$s$
  (whose~last configuration is~$s$),
  and s.t. for any~$\rho\in\Outcome(s,f)$,
  letting $f(\rho)=(\delta,a)$ and $\last(\rho)=(l,v)$, we~have
  \begin{itemize}
  \item $\rho\cdot ((l,v),\delta,(l,v+\delta')) \cdot
    ((l,v+\delta'),e,(l',v')) \in \Outcome(s,f)$ for any $0\leq
    \delta'\leq \delta$ and $\act(e)\in\Sigma_u$ such that
    $((l,v+\delta'),e,(l',v')) \in \pEx(\mA)$;
  \item and
    \begin{itemize}
    \item either $a=\bot$, and $\rho\cdot ((l,v),\delta,(l,v+\delta)) \in
      \Outcome(s,f)$;
    \item or $a\in\Sigma_c$, and
       $\rho\cdot ((l,v),\delta,(l,v+\delta)) \cdot
      ((l,v+\delta),e,(l',v')) \in \Outcome(s,f)$ with $\act(e)=a$;
    \end{itemize}
  \end{itemize}
  An infinite partial run is in $\Outcome(s,f)$ if infinitely many of its finite
  prefixes~are.
\end{defn}

In this paper, we will be interested in reachability winning
conditions (under particular conditions).  In the classical setting,
the set of winning configurations can be computed iteratively,
starting from the target location and computing controllable
predecessors in a backward manner. The~computation can be performed on
regions, so that it terminates (in exponential
time)~\cite{AMPS98,CDF+05}. We~extend this approach to our
test-generation framework in
Section~\ref{sec:translating_objectives_into_games}.

\section{Testing framework} % (fold)
\label{sec:testing_framework}
We now present the testing framework, defining \textit{(i)}~the main testing
artifacts \ie~ specifications, implementations, test purposes,
and test cases, along with the assumptions on them; \textit{(ii)}  a conformance
relation relating implementations and specifications. The combination of the
test purposes and the specification and the construction of an
approximate deterministic tester is afterward explained.
\subsection{Test context} % (fold)
\label{sub:test_context}
We~use TAIOs as models for specifications, implementations and test
cases, and OTAIOs for test purposes.  This allows to define liberal
test purposes, and on a technical side, gives a unity to the manipulated
objects.

In order to enforce the occurrence of conclusive verdicts, we~equip
specifications with \emph{restart transitions}, corresponding to a system
shutdown and restart, and assume that from any (reachable) configuration, a
restart is always reachable.

\begin{defn}{}{Specification}
A \emph{specification with restarts} (or simply \emph{specification})
on $(\Sigma_?,\Sigma_!,\Sigma_\tau)$ is a non-blocking,
repeatedly-observable TAIO $\mS = (L^\mS, l_0^\mS,
(\Sigma_?\cup\{\zeta\})\uplus\Sigma_!\uplus\Sigma_\tau,
X_p^\mS,I^\mS,E^\mS)$ where $\zeta\notin\Sigma_?$ is the restart
action. We~let $\Restart^\mS= E^\mS \cap (L^\mS\times
G_{M^\mS}(X^\mS)\times\{\zeta\}\times \{X_p^\mS\}\times \{l_0^\mS\})$ be
the set of $\zeta$-transitions, and it is assumed that from any
reachable configuration, there exists a finite partial execution
containing~$\zeta$, i.e. for any $s \in \Reach(\mS)$, there exists
$\mu$ s.t. $s\xrightarrow{\mu\cdot \zeta} s_0^\mS$.
\end{defn}

The non-blocking hypothesis rules out "faulty" specifications having
no conformant physically-possible implementation.
Repeated-observability will be useful for technical
reasons, when analyzing the exhaustiveness property of test cases.
Our assumption on $\zeta$-transitions entails:
\begin{restatable}{proposition}{propSC}\label{pr:Asc_s}
Let $\mS$ be a specification with restarts. Then $Reach(\mT_\mS)$ is
strongly-connected.
\end{restatable}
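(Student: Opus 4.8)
The plan is to show that any reachable state can reach the initial configuration $s_0^\mS$, and that $s_0^\mS$ itself is reachable; strong connectivity of $\Reach(\mT_\mS)$ then follows, since for any two reachable states $s$ and $s'$ we can route a partial run from $s$ to $s_0^\mS$ and then, because $s'$ is reachable, a run from $s_0^\mS$ to~$s'$, concatenating the two into a partial run from~$s$ to~$s'$.

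First I would unfold the definition of $\Reach(\mT_\mS)$: by definition it is $\Reach(\mS,\{s_0^\mS\})$, i.e.\ the set of configurations reachable from the initial configuration by a finite partial run. So for every $s\in\Reach(\mT_\mS)$ there is, trivially, a partial run from $s_0^\mS$ to~$s$, which gives the ``second half'' of the argument for free and also shows $s_0^\mS\in\Reach(\mT_\mS)$ (via the empty run). The substantive ingredient is the ``first half'': from any $s\in\Reach(\mT_\mS)$ we must exhibit a partial run back to~$s_0^\mS$. This is exactly what the assumption on $\zeta$-transitions in Definition~\ref{def:Specification} provides: for any $s\in\Reach(\mS)$ there exists~$\mu$ with $s\xrightarrow{\mu\cdot\zeta}s_0^\mS$. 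Hence $s_0^\mS\in\Reach(\mS,\{s\})$ for every reachable~$s$.

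To finish, I would take arbitrary $s,s'\in\Reach(\mT_\mS)$, pick a partial run $\rho_1$ from $s$ to $s_0^\mS$ (from the restart assumption applied to~$s$) and a partial run $\rho_2$ from $s_0^\mS$ to $s'$ (from $s'\in\Reach(\mT_\mS)$), and observe that the concatenation $\rho_1\cdot\rho_2$ is a valid partial run of $\mT_\mS$ from $s$ to~$s'$, since the last configuration of $\rho_1$ coincides with the first configuration of~$\rho_2$, both being~$s_0^\mS$. Thus $s'$ is reachable from~$s$, which is the definition of strong connectivity of the induced subgraph on $\Reach(\mT_\mS)$. I expect essentially no obstacle here: the only mild point of care is bookkeeping, namely confirming that $\zeta$-transitions reset \emph{all} proper clocks (the target of $\Restart^\mS$ is the set $\{X_p^\mS\}$ of reset sets and $\{l_0^\mS\}$ as target location), so that the state reached after $\mu\cdot\zeta$ is genuinely $(l_0^\mS,\overline 0_{X^\mS})=s_0^\mS$ and not merely the initial location with some residual clock valuation — this is what makes the concatenation above well-typed.
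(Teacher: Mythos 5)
Your proposal is correct and follows essentially the same route as the paper's proof: use the restart assumption to route any reachable configuration back to $s_0^\mS$ (noting that $\zeta$-transitions reset all proper clocks, so one really lands in $s_0^\mS$), then concatenate with the run witnessing reachability of the target configuration. Your write-up is somewhat more explicit about the concatenation step and the clock-reset bookkeeping, but the underlying argument is identical.
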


%MODIF: boucle ajoutée sur waste.
%\paragraph{Example of a specification.} % (fold)
\begin{wrapfigure}{R}{0.55\textwidth}
\centering
\begin{tikzpicture}[yscale=.8,->,>=stealth',shorten >=1pt,auto,node distance=3.3cm,semithick,  scale=0.75,  every node/.style={transform shape}	]
  \tikzstyle{every state}=[fill=red!50,draw=none,text=black, font=\scriptsize,inner sep=0pt, minimum size=1.1cm]
  \tikzstyle{every path}=[font=\scriptsize, below]
  %\tikzstyle{every label}=[font=\scriptsize]
  \node[state] (End1)     [] {$\nodeinv{\Dest_1}{\true}$};
  \node[state] (Boarding) [right of=End1] {$\nodeinv{\Boarding}{x\leq 3}$};
  \node[state] (End2)     [right of=Boarding] {$\nodeinv{\Dest_2}{\true}$};
  \node[state] (Sort)     [below of=Boarding] {$\nodeinv{\Sort}{x\leq 1}$};
  \node[state] (Waste)    [right of=Sort, node distance=2cm] {$\nodeinv{\Waste}{x\leq 1}$};
  \node[state,initial] (Start)    [below of=Sort] {$\nodeinv{\Start}{x\leq 2}$};

\path (Start)     edge[bend right=0] node[left] {$\stack{x\leq2}{\tau}{\{x\}}$} (Sort)
      (Sort)      edge node[above] {$\stack{\true}{\waste!}{\{x\}}$} (Waste)
      (Sort)      edge node[right] {$\stack{\true}{\tau}{\{x\}}$}   (Boarding)
      (Boarding)  edge node[above] {$\stack{\true}{\ship_1?}{\{x\}}$} (End1)
      (Boarding)  edge node[above] {$\stack{\true}{\ship_2?}{\{x\}}$} (End2)
      (Boarding)  edge [bend right=40] node [pos=.3,left] {$\stack{x=3}{\past!}{\{x\}}$}   (Start)
      (End1)      edge [loop above] node [pos=.25,left] {$\stack{x=1}{\aend_1!}{\{x\}}$}  (End1)
      (End2)      edge [loop above] node [pos=.25,left] {$\stack{x=1}{\aend_2!}{\{x\}}$}  (End2)
      (End1)      edge [bend right] node[left] {$\stack{\true}{\zeta}{\{x\}}$}  (Start.170)
      (End2)      edge[bend left] node[right] {$\stack{\true}{\zeta}{\{x\}}$}  (Start.10)
      (Waste)     edge [loop above] node [pos=.25,above] {$\stack{x=1}{\tau}{\{x\}}$}   (Waste)
      (Waste)     edge node[right] {$\stack{\true}{\zeta}{\{x\}}$}  (Start);
\end{tikzpicture}
	\caption{A conveyor belt specification.}
	\label{fig:conv_belt}
\end{wrapfigure}
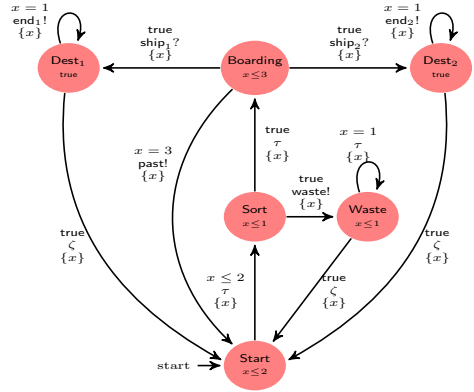
\begin{example}
\label{ex:exemple_of_specification}
Figure~\ref{fig:conv_belt} is an example of specification for a
conveyor belt. After a maximum time of 2 units (depending for example
on their weight),
packages reach a sorting point where they are
automatically sorted between packages to reject and packages to
ship. Packages to reject go to waste, while packages to ship are
sent to a boarding platform, where an operator can send them to two
different destinations. If~the~operator takes more than 3 units of
time to select a destination, the~package goes past the boarding
platform and restarts the process.
\end{example}

In practice, test purposes are used to describe the intention of test
cases, typically behaviours one wants because they must be correct
and\slash or an error is suspected.  In~our~formal testing framework,
we~describe them with OTAIOs that observe the specification together
with accepting locations.

\begin{defn}{}{TestPurpose}
  Given a specification $\mS = (L^\mS, l_0^\mS,
  (\Sigma_?\cup\{\zeta\})\uplus\Sigma_!\uplus\Sigma_\tau,X_p^\mS,I^\mS,E^\mS\uplus\Restart)$, a~\emph{test purpose for~$\mS$} is a pair
  $(\TP,\Accept^\TP)$ where
%\begin{itemize}
  $\TP = (L^\TP, l_0^\TP,
  \Sigma_?\cup\{\zeta\}\uplus\Sigma_!\uplus\Sigma_\tau,X_p^\TP\uplus
  X_p^\mS,I^\TP,E^\TP)$ is a complete OTAIO together with a subset
  $\Accept^\TP\subseteq L^\TP$ of accepting locations, and such that
  transitions carrying restart actions~$\zeta$ reset all proper clocks
  and return to the initial state (i.e., for any
  $\zeta$-transition~$(l,g,\zeta,X',l')\in E$, it~must be $X'=X_p^\TP$ and
  $l'=l_0^\TP$).
%\end{itemize}
\end{defn}
In the following, we may simply write~$\TP$ in place of
$(\TP, \Accept^\TP)$.  We~force test purposes
to be complete because they should never constrain the runs of the
specification they observe, but should only label the accepted
behaviours to be tested.  Test purposes observe exactly the clocks
of the specification in order to synchronize with them, but cannot
reset them.

%\paragraph{Example of a test purpose.} % (fold)
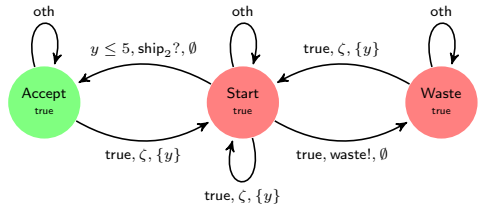
\begin{wrapfigure}{R}{0.57\linewidth}
\centering
\begin{tikzpicture}[->,>=stealth',shorten >=1pt,auto,node distance=3.3cm,semithick,  scale=0.8,  every node/.style={transform shape}]
  \tikzstyle{every state}=[fill=red!50,draw=none,text=black, font=\scriptsize, inner sep=0pt, minimum size=1.2cm]
  \tikzstyle{every path}=[font=\scriptsize, sloped, below]
  %\tikzstyle{every label}=[font=\scriptsize]
  \node[state] (Start)     [] {$\nodeinv{\Start}{\true}$};
  \node[state,fill=green!50] (Accept) [left of=Start] {$\nodeinv{\Accept}{\true}$};
  \node[state] (Waste)     [right of=Start] {$\nodeinv{\Waste}{\true}$};
  \path[use as bounding box] (0,0) -- (0,-2);
\path (Start)     edge [bend right] node [above] {$y\leq5,\ship_2?,\emptyset$} (Accept)
      (Start)     edge [bend right] node {$\true,\waste!,\emptyset$} (Waste)
      (Start)     edge [loop above] node {$\oth$}   (Start)
      (Start)     edge [loop below] node[below] {$\true,\zeta,\{y\}$} (Start)
      (Accept)    edge [loop above] node {$\oth$} (Accept) 
      (Accept)    edge [bend right] node {$\true,\zeta,\{y\}$} (Start)
      (Waste)    edge [loop above] node {$\oth$} (Waste) 
      (Waste)    edge [bend right] node[above] {$\true,\zeta,\{y\}$} (Start);
\end{tikzpicture}
	\caption{A test purpose for the conveyor belt.}
	\label{fig:tp}
\end{wrapfigure}
\begin{example}
\label{ex:exemple_of_test_purpose}
Figure~\ref{fig:tp} is a test purpose for our conveyor-belt
example. We~want to be sure that it is possible to ship a package to
destination~2 in less than 5 time units, while avoiding to go in
waste. The $\Accept$ set is limited to a location, named
\Accept. We~note $\oth$ the set of transitions that reset no clocks, and
is enabled for an action other than $\zeta$ when no other transition
is possible for this action. This set serves to complete the test purpose. The~test purpose has a proper clock~$y$.
\end{example}
% paragraph exemple_of_test_purpose (end)

In practice, conformance testing links a mathematical model,
the~specification, and a black-box implementation, that is a real-life
\emph{physical} object observed by its interactions with the
environment.  In~order to formally reason about conformance, one~needs
to bridge the gap between the mathematical world and the physical world.
We~then assume that the implementation corresponds to an unknown~TAIO.
\begin{defn}{}{Implementation}
  Let $\mS = (L^\mS, l_0^\mS,
  (\Sigma_?\cup\{\zeta\})\uplus\Sigma_!\uplus\Sigma_\tau,
  X_p^\mS,I^\mS,E^\mS\cup \Restart)$ be a specification
  TAIO.  An \emph{implementation} of~$\mS$ is an input-complete and
  non-blocking TAIO $\mI = (L^\mI, l_0^\mI,
  (\Sigma_?\cup\{\zeta\})\uplus\Sigma_!\uplus\Sigma_\tau^\mI,
  X_p^\mI,I^\mI,E^\mI)$.
  We note $\mI(\mS)$ the set of possible implementations of~$\mS$.
\end{defn}

The hypotheses made on implementations are not restrictions, but model
real-world contingencies: the environment might always provide any
input and the system cannot alter the course of time.

Having defined the necessary objects, it is now possible to introduce
the \emph{timed input-output conformance}~($\tioco$)
relation~\cite{KT09}. Intuitively, it~can be understood as
"after any specified behaviour, outputs and delays of the
implementation should be specified".
\begin{defn}{}{tioco}
Let $\mS$ be a specification and $\mI\in \mI(\mS)$. We~say that~$\mI$
\emph{conforms~to}~$\mS$ for~$\tioco$, and write $\mI\tioco \mS$
when:
\[
\forall \sigma\in \Traces(\mS), \Sout(\mI \after 
\sigma)\subseteq \Sout(\mS \after  \sigma)
\]
\end{defn}
Note that it is not assumed that restarts are well implemented: if~they
are not, it~is significant only if it induces non-conformant
behaviours.

% subsection test_context (end)

\subsection{Combining specifications and test purposes} % (fold)
\label{sub:combining_specification_and_test_purpose}

Now that the main objects are defined, we explain how the behaviours
targeted by the test purpose~$\TP$ are characterized on the
specification~$\mS$ by the construction of the product~OTAIO
$\mP=\mS\times\TP$.  Since $\mS$ is a TAIO and the observed clocks
of~$\TP$ are exactly the clocks of~$\mS$, the~product~$\mP$ is
actually a~TAIO.  Furthermore, since $\TP$ is complete, $\Seq(\mP) =
\Seq(\mS)$.
This entails that $\mI \tioco \mS$ is equivalent to
$\mI\tioco \mP$.  Note in particular that $\zeta$ of $\mS$
synchronize with $\zeta$ of $\TP$, which are available everywhere.

By defining accepting locations in the product by
$\Accept^{\mP}=L^\mS\times\Accept^\TP$, we~get that sequences
accepted in~$\mP$ are exactly sequences of~$\mS$ accepted by~$\TP$.

%\paragraph{Example.} % (fold)
\begin{example}
\label{ex:exemple-prod}
Fig.~\ref{fig:product}
represents the product of the conveyor-belt specification of
Fig.~\ref{fig:conv_belt} and the test purpose of
Fig.~\ref{fig:tp}. All~nodes are named by the first letters of the
corresponding states of the specification (first) and of the test
purpose. The~only accepting location is~$(D_2,A)$.
%MODIF: AJout de la boucle sur Waste, invariant modifié sur Start
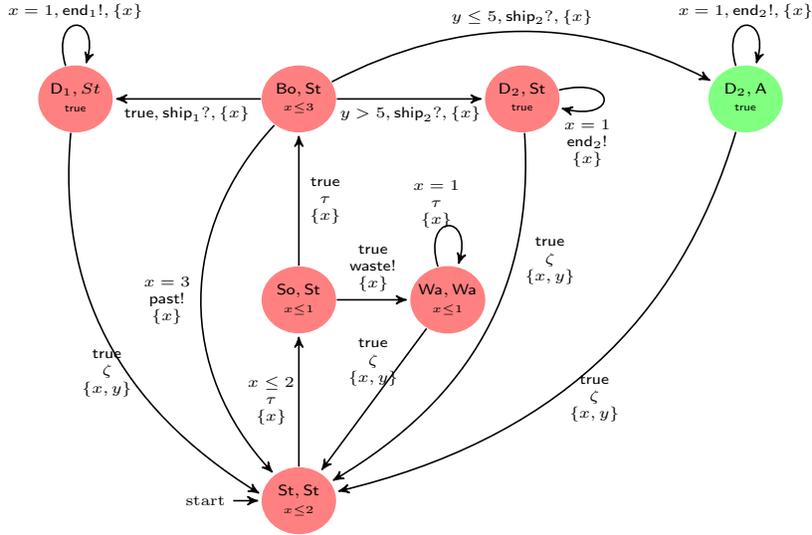
\begin{figure}[!h]
\centering
\begin{tikzpicture}[yscale=.9,->,>=stealth',shorten >=1pt,auto,node distance=3.3cm,semithick, scale=.9,  every node/.style={transform shape}]
  \tikzstyle{every state}=[fill=red!50,draw=none,text=black, font=\scriptsize, inner sep=0pt, minimum size=1.1cm]
  \tikzstyle{every path}=[font=\scriptsize, below]
  %\tikzstyle{every label}=[font=\scriptsize]
  \node[state] (End1S)     [] {$\nodeinv{\aD_1,St}{\true}$};
  \node[state] (Boarding) [right of=End1S] {$\nodeinv{\Bo,\St}{x\leq 3}$};
  \node[state] (End2S)     [right of=Boarding] {$\nodeinv{\aD_2,\St}{\true}$};
  \node[state, fill=green!50] (End2A)     [right of=End2S] {$\nodeinv{\aD_2,\aA}{\true}$};
  \node[state] (Sort)     [below of=Boarding] {$\nodeinv{\So,\St}{x\leq 1}$};
  \node[state] (Waste)    [right of=Sort,node distance=2.2cm] {$\nodeinv{\Wa,\Wa}{x\leq 1}$};
  \node[state,initial] (Start)    [below of=Sort] {$\nodeinv{\St,\St}{x\leq 2}$};

\path (Start)     edge node[left=-1mm] {$\stack{x\leq2}{\tau}{\{x\}}$} (Sort)
      (Sort)      edge node[above] {$\stack{\true}{\waste!}{\{x\}}$} (Waste)
      (Sort)      edge node[right] {$\stack{\true}{\tau}{\{x\}}$}   (Boarding)
      (Boarding)  edge node {$\true,\ship_1?,\{x\}$} (End1S)
      (Boarding)  edge node {$y>5,\ship_2?,\{x\}$} (End2S)
      (Boarding)  edge [bend left] node[above] {$y\leq5,\ship_2?,\{x\}$} (End2A)
      (Boarding)  edge [bend right=40] node [left] {$\stack{x=3}{\past!}{\{x\}}$}   (Start)
      (End1S)     edge [loop above] node [above] {$x=1,\aend_1!,\{x\}$}  (End1S)
      (End2S)     edge [loop right] node[pos=.8,below]  {$\stack{x=1}{\aend_2!}{\{x\}}$}  (End2S)
      (End2A)     edge [loop above] node [above] {$x=1,\aend_2!,\{x\}$}  (End2A)
      (End1S)     edge [bend right] node {$\stack{\true}{\zeta}{\{x,y\}}$}  (Start.170)
      (End2S)     edge[bend left] node[pos=.3,right] {$\stack{\true}{\zeta}{\{x,y\}}$}  (Start)
      (End2A)     edge [bend left] node {$\stack{\true}{\zeta}{\{x,y\}}$}  (Start)
      (Waste)     edge [loop above] node [pos=.25,above] {$\stack{x=1}{\tau}{\{x\}}$}   (Waste)
      (Waste)     edge node[above=1mm] {$\stack{\true}{\zeta}{\{x,y\}}$}  (Start);
\end{tikzpicture}

	\caption{Product of the conveyor belt specification and the presented test purpose.}
	\label{fig:product}
\end{figure}
\end{example}
% paragraph exemple (end)

We make one final hypothesis: we consider only pairs of
specifications~$\mS$ and test purposes~$\TP$ whose product~$\mP$ can
be exactly determinized by the determinization game presented
in~\cite{BSJK15}. This restriction is necessary for technical
reasons: if~the determinization is approximated, we~cannot ensure
that restarts are still reachable in~general.
Notice that it is satisfied in several classes of
automata, such as strongly non-zeno automata, integer-reset automata,
or event-recording automata.
\par Given the product $\mP=\mS\times\TP$, let  $\mdp$ be its exact determinization.
 In~this case, $\Traces(\mdp)=\Traces(\mP)$,
hence the reachability of $\zeta$ transitions is preserved.  
Moreover the traces leading to $\Accept^\mdp$ and 
$\Accept^\mP$ are the same.

%MODIF: Exemple.
%\paragraph{Example.} % (fold)
\begin{example}
\label{ex:exemple-DP}
The automaton in Fig. \ref{fig:DP}~is a deterministic approximation of the product presented in Fig. \ref{fig:product}. The internal transitions have collapsed, leading to an augmented $\Start$ locality.
\begin{figure}[!h]
\centering
\begin{tikzpicture}[yscale=.9,->,>=stealth',shorten >=1pt,auto,node distance=3.3cm,semithick, scale=.9,  every node/.style={transform shape}]
  \tikzstyle{every state}=[fill=red!50,draw=none,text=black, font=\scriptsize, inner sep=0pt, minimum size=1.1cm]
  \tikzstyle{every path}=[font=\scriptsize, below]
  %\tikzstyle{every label}=[font=\scriptsize]
  \node[state] (End1S)     [] {$\nodeinv{\aD_1,St}{\true}$};
  \node[state] (Start)    [right of=End1S] {$\nodeinv{\St}{x\leq 6}$};
  \node[state] (End2S)     [right of=Start] {$\nodeinv{\aD_2,\St}{\true}$};
  \node[state, fill=green!50] (End2A)     [right of=End2S] {$\nodeinv{\aD_2,\aA}{\true}$};
  \node[state] (Waste)    [above of=Start,node distance=2.2cm] {$\nodeinv{\Wa}{\true}$};

\path (Start)     edge [bend left] node[left=-1mm] {$\stack{x\leq3}{\waste!}{\{x\}}$} (Waste)
      (Start)     edge node [above] {$\true,\ship_1?,\{x\}$} (End1S)
      (Start)     edge [bend left] node [below=1mm] {$y>5,\ship_2?,\{x\}$} (End2S)
      (Start)     edge [bend left] node [above] {$y\leq5,\ship_2?,\{x\}$} (End2A)
      (Start)     edge [loop below] node [below] {$3\leq x\leq6, \past!, \{x\}$}   (Start)
      (End1S)     edge [loop above] node [above] {$x=1,\aend_1!,\{x\}$}  (End1S)
      (End2S)     edge [loop right] node[right]  {$\stack{x=1}{\aend_2!}{\{x\}}$}  (End2S)
      (End2A)     edge [loop above] node [above] {$x=1,\aend_2!,\{x\}$}  (End2A)
      (End1S)     edge [bend right] node {$\true,\zeta,\{x,y\}$}  (Start)
      (End2S)     edge node[below] {$\true,\zeta,\{x,y\}$}  (Start)
      (End2A)     edge [bend left] node {$\true,\zeta,\{x,y\}$}  (Start)
      (Waste)     edge node[right=-1mm] {$\stack{\true}{\zeta}{\{x,y\}}$}  (Start);
\end{tikzpicture}

  \caption{A deterministic approximation of the product.}
  \label{fig:DP}
\end{figure}
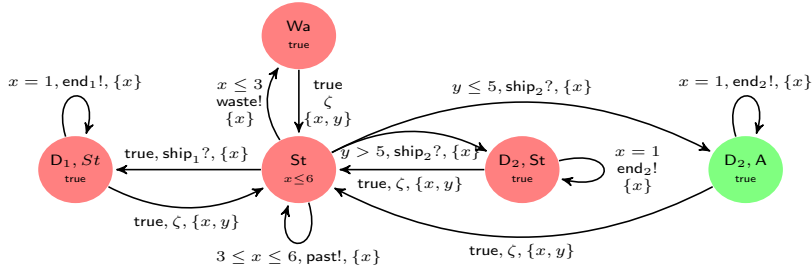
\end{example}

% paragraph exemple (end)

% subsection combining_specification_and_test_purpose (end)

\subsection{Accounting for failure} % (fold)
\label{sub:accounting_for_failure}
At this stage of the process, we dispose of a deterministic and
fully-observable TAIO~$\mdp$ having exactly the same traces as the
original specification, and having a subset of its localities labelled
as accepting for the test purpose. From this TAIO, we aim to build a
tester, that can be able to monitor the implementation, feeding~it with
inputs and selecting verdicts from the returned outputs.

$\mdp$ models the accepted traces with $\Accept^\mdp$.
In~order to also explicitely model faulty behaviours (unspecified
outputs after a specified trace), we now
complete~$\mdp$ with respect to its output alphabet, by~adding an
explicit~$\Fail$ location. We call this completed TAIO the
\emph{objective-centered tester}.

\begin{defn}{}{Objective centered tester}
  Given a deterministic TAIO
  $\mdp = (L^\mdp, l_0^\mdp,
  \Sigma_?\uplus\Sigma_!\uplus\Sigma_\tau,\penalty1000
  X_p^\mdp,\penalty1000 I^\mdp,E^\mdp)$,
  we~construct its objective-centered tester
  $\ot = (L^\mdp\cup\{\Fail\},\penalty1000 l_0^\mdp,\penalty1000
  \Sigma_?\uplus\Sigma_!\uplus\Sigma_\tau,\penalty1000
  X_p^\mdp,I^\ot,E^\ot)$
  where $I^\ot(l)=\true$.
  The~set of transitions $E^\ot$ is defined from~$E^\mdp$~by:
  \begin{multline*}
    E^\ot = E^\mdp\cup
    \smash{\bigcup_{{\substack{l\in L^{\mdp}\\a\in\Sigma_!^{\mdp}}}}}
    \{(l,g,a,\emptyset,\Fail) \mid g\in \overline G_{a,l}\}
    \\{}\cup{} \{(\Fail,\true,a,\emptyset,\Fail) \mid a\in\Sigma^\mdp\}
    \end{multline*}
where for each~$a$ and~$l$, $\overline G_{a,l}$ is a set of guards
complementing the set of all valuations~$v$ for which an
$a$-transition is available from~$(l,v)$ (notice that $\overline
G_{a,l}$ generally is non-convex, so that it cannot be represented by
a single guard).

Verdicts are defined on the configurations of $\ot$ as follows: 
\begin{itemize}
\item $\VPass = \bigcup_{l\in \Accept^\mdp}(\{l\}\times I^\mdp(l))$,
\item $\VFail = \{\Fail\/\}\times\mathbb{R}_{\geq0}
    \cup\bigcup_{l\in L^\mdp}\left(\{l\}\times \bigl(\bbR_{\geq 0}^{X_p}\setminus I^\mdp(l)
    \bigr)\right)$.
\end{itemize}
\end{defn}

Notice that we do not define the usual \VInconc verdicts (\ie~configurations in which
we cannot conclude to non-conformance, nor accept the run with
respect to the test purpose)
as we will enforce the apparition of \VPass or~\VFail. \VPass~corresponds
to behaviours accepted by the test purpose, while \VFail corresponds
to non-conformant behaviours.
Note
that~$\ot$ inherits the interesting structural properties
of~$\mdp$. More importantly, $\zeta$ is always reachable as long as no
verdict has been emited, and $\ot$ is repeatedly-observable out
of~$\VFail$.

It remains to say that $\ot$ and $\mdp$ model the same
behaviours. Obviously, their~sets of traces differ,
but the traces added in~$\ot$ precisely correspond to
runs reaching~\VFail.
We~now define a specific subset
of runs, sequences and traces corresponding to traces that are
meant to be accepted by the specification.

\begin{defn}{}{ConformingExecutions}
A run~$\rho$ of an objective-centered tester~$\ot$ is said \emph{conformant} if
it does not reach~\VFail.
We~note $\Ex_{\conf}(\ot)$ the set of conformant runs of~$\ot$, and
$\Seq_{\conf}(\ot)$ (resp.~$\Traces_{\conf}(\ot)$)
the corresponding sequences (resp.~traces).
We note $\Ex_{\fail}(\ot)=\Ex(\ot)\setminus \Ex_{\conf}(\ot)$ and similarly
for the sequences and traces.
\end{defn}
The conformant traces are exactly those specified by~$\mdp$,
i.e. $\Traces(\mdp)=\Traces_{\conf}(\ot)$ and correspond to executions \tioco-conformant with the specification, while $\Ex_{\fail}$ are runs where a non-conformance is detected.
% subsection accounting_for_failure (end)
% section testing_framework (end)

\section{Translating objectives into games} % (fold)
\label{sec:translating_objectives_into_games}
In this section, we interpret 
objective-centered tester into games between the tester and the implementation
and propose strategies that try to avoid control losses.
We then introduce a scope in which
the tester always has a winning strategy, and discuss the properties
of the resulting test cases (\ie~game structure and built strategy).

We~want to enforce conclusive verdicts when running test cases, i.e. either the
implementation does not conform to its specification (\VFail verdict)
or the awaited behaviour appears (\VPass verdict). We thus say that an
execution~$\rho$ is winning for the tester if it reaches a \VFail or \VPass configuration
and note~$\Win(\g)$ the set of such executions.  In~the following, we consider the TGA \(
\g^\ot = (L^\ot, l_0^\ot,\penalty1000
\Sigma_?^\ot\uplus\Sigma_!^\ot,\penalty1000 X_p,I^\ot,E^\ot) \) where
the controllable actions are the inputs $\Sigma_c=\Sigma_?^\ot$
and the uncontrollable actions are the outputs
$\Sigma_u=\Sigma_!^\ot$.

\subsection{Rank-lowering strategy} % (fold)
\label{sub:rank_lowering_strategy}

In this part, we restrict our discussion to TGAs where
\VPass configurations are reachable (when seen as plain
TAs). Indeed, if none can be reached,
and we will discuss the fact that the proposed method can detect this fact,
trying to construct a strategy seeking a \VPass verdict is hopeless.
This~is a natural restriction, as it only rules out unsatisfiable test purposes.

The tester cannot force the occurrence of a non-conformance
(as he does not control~outputs and delays),
and hence cannot push the system into a \VFail
configuration. A~strategy for the tester should thus target the
\VPass~set in a partially controlable way, while monitoring \VFail.
For that purpose, we  define a hierarchy of
configurations, depending on their "distance" to \VPass.
This uses a backward algorithm, for which we define the predecessors of
a configuration.

Given a set of configurations $S'\subseteq S$ of $\g^\ot$, we define
three kinds of predecessors, letting $\overline{V}$ denote the
complement of~$V$:
\begin{itemize}
\item discrete predecessors by a sub-alphabet $\Sigma'\subseteq \Sigma$:
  \[
  \Pred_{\Sigma'}(S')=\{(l,v) \mid \exists a\in\Sigma',\
    \exists (l,a,g,X',l')\in E, 
    v\models g \wedge (l',v_{[X'\leftarrow0]})\in S'\}	
  \]
\item timed predecessors, while avoiding a set $V$ of configurations:
  \[
  \tPred(S',V)=\{(l,v) \mid \exists \delta\in\mathbb{R}_{\geq0},\
  (l,v+\delta)\in S' \wedge
  %MODIF: \delta' <\delta --> \delta' \leq\delta
  \forall\ 0\leq \delta' \leq\delta.\ (l,v+\delta')\notin V \}
  \]
  We furthermore note $\tPred(S')=\tPred(S',\emptyset)$.
\item final timed predecessors are defined for convenience (see below):
  \[
  \ftPred(S') = \tPred(\VFail,\Pred_{\Sigma_u}(\overline{S'}))\cup
  \overline{\tPred(\Pred_\Sigma(\overline{S'}))}
	\]
\end{itemize}
The final timed predecessors correspond to
situations where the system is 'cornered', having the choice between
taking an uncontrollable transition to~$S'$ (as no uncontrollable
transition to~$\overline{S'}$ will be available) or reach
\VFail. Such situations are not considered as
control losses, as the system can only take a beneficial
transition for the tester (either by going to $S'$ or to~\VFail).
Note that $\tPred$ and $\ftPred$ need not return convex sets,
but are efficiently computable using $\Pred$ and simple set
constructions~\cite{CDF+05}. 
% Precisely:
% \begin{prop}{}{tPred}
% For a convex set~$B$, 
% $\tPred(G,B)=(\pred{G}\setminus\pred{B})\cup
% \pred{((G\cap \pred{B})\setminus B)}$.
% Moreover, 
% \[
% \tPred(\bigcup_i G_i,\bigcup_j B_j)=\bigcup_i\bigcap_j \tPred(G_i,B_j).
% \]
% \end{prop}
%
Now, using these notions of predecessors, a
hierarchy of configurations based on the 'distance' to~\VPass
is defined.
\begin{defn}{}{ConfigHierarchy}
The sequence $(W^j_i)_{j,i}$ of sets of configurations is defined~as: 
\begin{itemize}
\item $W^0_0 = \VPass$
\item $W^j_{i+1} = \pi(W^j_i)$ with $\pi(S')=\tPred\left(S'\cup
  \Pred_{\Sigma_c}(S'),\Pred_{\Sigma_u}(\overline{S'})\right) \cup
  \ftPred(S')$
\item \( W^{j+1}_0 = \tPred(W^j_{\infty}\cup
  \Pred_{\Sigma}(W^j_{\infty})) \) with $W^j_{\infty}$ the limit\footnote{The sequence~$(W_i^j)_i$
  is non-decreasing,
  and can be computed in terms of clock regions; hence the limit
  exists and is reached in a finite number of
  iterations~\cite{CDF+05}.}  of the sequence~$(W^j_i)_i$.
 
\end{itemize}
\end{defn}
In this hierarchy, $j$~corresponds to the minimal number of
\emph{control losses}
the tester has
to go through (in~the worst case) in order to reach~\VPass, and $i$ corresponds to the
minimal number of steps before the next control loss (or~to~$\VPass$).
%MODIF: explication des control losses.
The $W^{j+1}_0$ are considered 'control losses' as the implementation might take an output transition leading to an undesirable configuration (higher on the hierarchy). On the other hand, in the construction of $W^j_i$ the tester keep a full control, as it is not possible to reach such bad configuration with an incontrolable transition.
%MODIF: exponentiel dans le nombre d'horloges de OT.
Notice that the sequence $(W^j_i)$ is an increasing sequence of regions, and hence can be computed in time exponential in~$X^\ot$ and linear in~$L^\ot$.
%Notice that $W_i^j\subseteq W^j_{i+1}$ because the
%definitions of the timed predecessors take into account null
%delays.

We then have the following property:

%\begin{prop}{}{Coverage}
\begin{restatable}{proposition}{propcoverage}%{}{Coverage}
\label{coverage}
%Let $\mT=(S,s_0^\mT,\Gamma,\mathord\rightarrow)$ be the timed transition
%system associated with~$\g$.
There exists $i,j\in\mathbb{N}$ such
that $\Reach(\g^\ot)\setminus\VFail\subseteq W^j_i$.
\end{restatable}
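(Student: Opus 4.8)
The statement claims that the hierarchy $(W^j_i)_{j,i}$ eventually covers all reachable non-failure configurations. The plan is to argue in two stages: first, that the hierarchy \emph{stabilizes} at some pair $(j^*,i^*)$ — i.e.\ $W^{j^*+1}_0 = W^{j^*}_\infty$ and hence $W^{j}_i = W^{j^*}_\infty$ for all $j\ge j^*$ — using the region argument from~\cite{CDF+05} already invoked in the footnotes; and second, that this stable set $W^{j^*}_\infty$ must contain $\Reach(\g^\ot)\setminus\VFail$. Since everything is computable on clock regions and the sequences $(W^j_i)_i$ and $(W^j_\infty)_j$ are both non-decreasing, there are only finitely many distinct region-definable sets, so stabilization of both indices is automatic; the real content is the second stage.

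For the second stage, the natural approach is by contradiction: suppose some $s = (l,v) \in \Reach(\g^\ot)\setminus\VFail$ is not in $W := W^{j^*}_\infty$. I~would examine the behaviour of the system from~$s$ and derive that $s$ \emph{should} have been added at some iteration, contradicting the fixpoint property $\pi(W) = W$ and $\tPred(W\cup\Pred_\Sigma(W)) = W$. The key observations to assemble are: (1)~from any configuration, since $\ot$ is repeatedly observable out of $\VFail$ and non-blocking (inherited from $\mdp$ and ultimately from the specification), time can always elapse and some observable transition is eventually available, so the system cannot get "stuck" outside $W$ forever without hitting $\VFail$; (2)~by Proposition~\ref{pr:Asc_s} and the preservation of reachability of $\zeta$-transitions through determinization, $\VPass = W^0_0$ ought to be reachable from $s$ in the underlying plain TA (this uses the standing restriction of Section~\ref{sub:rank_lowering_strategy} that $\VPass$ is reachable, combined with strong connectedness); (3)~the $\ftPred$ component of $\pi$ together with the $W^{j+1}_0$ step precisely capture the two ways the system can be forced towards progress — either a controllable move shortens the distance, or every uncontrollable move is "beneficial" (goes to the current level or to $\VFail$), or we absorb one control loss. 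Walking a shortest path (in the region graph) from $s$ to $\VPass$ and performing induction on its length, I~would show each configuration on this path lies in some $W^j_i$, with $j$ bounded by the number of edges on the path along which control is lost.

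More concretely, I~would set up the induction on a well-chosen measure — e.g.\ the length of a shortest witnessing path from $s$ to $\VPass$ in the region automaton of $\g^\ot$, which is finite by item~(2). For the inductive step at a configuration $s' = (l',v')$: if from $s'$ one can let time pass (avoiding $\VFail$ and avoiding configurations from which an uncontrollable transition escapes the current target set) and then either reach the target set directly or take a controllable transition into it, then $s' \in \pi(\text{target})$ by definition of $\pi$; if instead the system is "cornered" so that every available uncontrollable move leads either into the target set or into $\VFail$, then $s'$ is caught by the $\ftPred$ term; and if neither holds, there is an uncontrollable transition to a configuration strictly higher in the hierarchy, which is exactly the situation the $W^{j+1}_0$ step is designed to absorb — so $s'$ belongs to $W^{j+1}_0$ for the appropriate $j$. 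Since the path has finite length and control is lost at most finitely often along it, $s'$ — and in particular $s$ — lands in $W^j_i$ for some finite $i,j$, contradicting $s\notin W = W^{j^*}_\infty$.

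\textbf{Main obstacle.} The delicate point is item~(3): cleanly matching the three "cases" at each configuration to the three ingredients of the definition ($\tPred$ with the avoid-set $\Pred_{\Sigma_u}(\overline{S'})$, the $\ftPred$ term, and the control-loss step $W^{j+1}_0$), and making sure the case split is exhaustive and that time-elapsing is handled correctly — in particular that while waiting we do not inadvertently cross $\VFail$ (which is fine, $\VFail$ is also winning, but the hierarchy is built around $\VPass$) or cross a configuration offering a bad uncontrollable move. One must be careful that the "shortest path" is chosen to respect these constraints, possibly re-choosing the path so that it only traverses configurations where progress is genuinely forced; alternatively one phrases the whole argument directly as a fixpoint characterization — $\bigcup_{j,i} W^j_i$ is the largest region-definable set $R$ with $\VPass\subseteq R$ and closed under the one-step operator, and $\Reach(\g^\ot)\setminus\VFail$ is shown to be contained in any such $R$ — which sidesteps the explicit path surgery but still requires the same case analysis to verify closure.
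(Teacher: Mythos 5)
Your overall strategy is the right one and matches the paper's: use the standing hypothesis that $\VPass$ is reachable together with strong connectivity of $\Reach(\mT^\ot)\setminus\VFail$ (the paper's Corollary on the objective-centered tester, descended from Proposition~\ref{pr:Asc_s}) to get, from every reachable non-$\VFail$ configuration~$s$, a finite path to $\VPass$ of length bounded by the number of regions, and then induct on the length of that path. Where you diverge is in how the inductive step is discharged. You propose a three-way case analysis matching each configuration on the path to the three ingredients of $\pi$ ($\tPred$ with the avoid-set, $\ftPred$, and the control-loss step), and you correctly flag the exhaustiveness of that case split and the handling of time elapse as the delicate point. The paper avoids this entirely by being deliberately wasteful: since $W^{j+1}_0=\tPred\bigl(W^j_\infty\cup\Pred_\Sigma(W^j_\infty)\bigr)$ and $\Pred_\Sigma$ ranges over \emph{all} actions while $\tPred(\cdot)=\tPred(\cdot,\emptyset)$ admits $\delta=0$, every one-step predecessor of $W^j_\infty$ — by a delay, a controllable transition, or an uncontrollable one — already lies in $W^{j+1}_0$. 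So each edge of the path can be charged as (at worst) one control loss, yielding $s\in W^n_0$ for $n$ the path length, with no case analysis and no path surgery; the statement only asks for \emph{some} $(j,i)$, not an economical one. Your "main obstacle" therefore dissolves once you notice that your third case is a catch-all subsuming the other two, and your argument closes; it just proves the result the hard way. (Your contradiction/largest-fixpoint framing is likewise sound but unnecessary — the direct induction already lands $s$ in a concrete $W^n_0$.)
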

%\end{prop}
As explained above, this property is based on the assumption
that the \VPass verdict is reachable.
Nevertheless, if~it is~not it will be detected during the hierarchy construction
that will converge to a fixpoint not including~$s_0^\g$.
As all the configurations in which we want to define a strategy are
covered by the hierarchy, we can use it to define a partial order.
\begin{defn}{}{Rank}
  Let $s\in \Reach(\g^\ot)\setminus\VFail$. The \emph{rank} of $s$~is:
  \[
  r(s) = (j_s=\argmin_{j\in\mathbb{N}}(s\in W^j_{\infty}), i_s =
  \argmin_{i\in\mathbb{N}}(s\in W^{j_s}_i))
  \]
\end{defn}
%MODIF: explication du rang ajoutée:
For $r(s)=(j,i)$; $j$ is the minimal number of control losses before reaching an accepting state, and $i$ is the minimal number of steps in the strategy before the next control loss.
We note $s\po s'$
when $r(s)\lo r(s')$, where $\lo$ is the lexical order on $\mathbb{N}^2$.
%\begin{prop}{}{PartialOrder}
\begin{restatable}{proposition}{propPartialOrder}
$\po$ is a partial order on $\Reach(\g^\ot)\setminus\VFail$.
\end{restatable}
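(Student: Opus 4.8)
The plan is to show that $\po$ is irreflexive and transitive on $\Reach(\g^\ot)\setminus\VFail$; since $\lo$ is the lexicographic order on $\mathbb{N}^2$, both follow essentially from the fact that $r$ is a well-defined function into $\mathbb{N}^2$ and from the standard properties of $\lo$. First I would argue that $r$ is well-defined on $\Reach(\g^\ot)\setminus\VFail$: by Proposition~\ref{coverage} there exist $i,j$ with $\Reach(\g^\ot)\setminus\VFail\subseteq W^j_i\subseteq W^j_\infty$, so for every $s$ in this set the set $\{j\in\mathbb{N}\mid s\in W^j_\infty\}$ is non-empty, hence $j_s=\argmin$ of this set exists; and then $\{i\in\mathbb{N}\mid s\in W^{j_s}_i\}$ is non-empty by definition of $j_s$ (recall $W^j_\infty$ is the limit of the non-decreasing sequence $(W^j_i)_i$, reached in finitely many steps, so $s\in W^{j_s}_\infty$ implies $s\in W^{j_s}_i$ for some finite $i$), hence $i_s$ exists. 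Thus $r\colon \Reach(\g^\ot)\setminus\VFail\to\mathbb{N}^2$ is a total function.

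Given that $r$ is a function, irreflexivity of $\po$ is immediate: $s\po s$ would mean $r(s)\lo r(s)$, contradicting irreflexivity of the lexicographic order $\lo$ on $\mathbb{N}^2$. For transitivity, suppose $s\po s'$ and $s'\po s''$, i.e.\ $r(s)\lo r(s')$ and $r(s')\lo r(s'')$; since $\lo$ is transitive on $\mathbb{N}^2$ we get $r(s)\lo r(s'')$, that is $s\po s''$. This establishes that $\po$ is a strict partial order on $\Reach(\g^\ot)\setminus\VFail$.

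The only genuine content beyond unwinding definitions is the well-definedness of $r$, which in turn rests on two facts already available in the excerpt: Proposition~\ref{coverage} (which guarantees the relevant minima are taken over non-empty sets, relying on the reachability of \VPass), and the footnote observation that each sequence $(W^j_i)_i$ is non-decreasing and stabilizes after finitely many region-level iterations (so membership in the limit $W^j_\infty$ is witnessed at a finite index). I expect the main — and only — obstacle to be making sure the $\argmin$ in Definition~\ref{ConfigHierarchy}'s companion notion is applied to a non-empty, well-founded set; once $r$ is a bona fide $\mathbb{N}^2$-valued function, the partial-order properties are inherited verbatim from $\lo$, and no further argument about the game structure or the predecessor operators is needed.
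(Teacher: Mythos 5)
Your proof is correct and takes essentially the same route as the paper's: the paper's own argument is the one-liner that $\po$ directly inherits the properties of the lexicographic order $\lo$, together with the remark that it is not total since distinct configurations may share a rank. Your extra care in establishing that $r$ is a well-defined $\mathbb{N}^2$-valued function via Proposition~\ref{coverage} and the finite stabilization of $(W^j_i)_i$ is a welcome elaboration of a step the paper leaves implicit.
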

%\end{prop}

We dispose of a partial order on configurations, with $\VPass$ being the
minimal elements. We~use it to define a strategy trying to decrease
the rank during the execution.
For any~$s\in S$, we~write $r^{-}(s)$ for the largest rank such that
$r^{-}(s)<_{\mathbb{N}^2} r(s)$, and $W^{-}(s)$ for the associated set
in~$(W_i^j)_{j,i}$.
We~(partially) order pairs~$(\delta,a)\in\mathbb{R}_{\geq
0}\times\Sigma$ according to~$\delta$.
\begin{defn}{}{RankLowerStrat}
A strategy~$f$ for the tester is \emph{rank-lowering} if,
for any finite run~$\rho$ with $\last(\rho)=s=(l,v)$, 
%% writing $r^{-}(s)$ for the largest rank such that
%% $r^{-}(s)<_{\mathbb{N}^2} r(s)$ and $W^{-}(s)$ for the associated set
%% in~$(W_i^j)_{j,i}$,
%MODIF: one possible behaviour leading to the lesser delay in the following three -->
it selects the lesser delay satisfying one of the following constraints:
\begin{itemize}
  \item if
  $s\in \tPred(\Pred_{\Sigma_c}(W^{-}(s)))$,%\Pred_{\Sigma_u}(\overline{W^{-}(s)}))$,
  then $f(\rho)=(\delta,a)$ with~$a\in\Sigma_c$ s.t. there exists
  $e\in E$ with $\act(e)=a$ and $s\xrightarrow{\delta}\xrightarrow{e}
  t$ with $t\in W^{-}(s)$, and $\delta$ is minimal in the following
  sense: if $s\xrightarrow{\delta'}\xrightarrow{e'} t'$ with $t'\in
  W^{-}(s)$ and $\delta'\leq\delta$, then $v+\delta$ and $v+\delta'$
  belong to the same region;

  \item if
  $s\in \tPred(W^{-}(s))$,%\Pred_{\Sigma_u}(\overline{W^{-}(s)}))$
  then $f(\rho)=(\delta,\bot)$ such that $s\xrightarrow{\delta} t$ with
  $t\in W^{-}(s)$, and $\delta$ is minimal in the same sense as
  above;
 \item otherwise $f(\rho)=(\delta,\bot)$ where
  $\delta$ is maximal in the same sense as above (maximal
  delay-successor region).
\end{itemize}

\end{defn}
%MODIF: phrase ajoutée:
The two first cases follow the construction of the $W^j_i$ and propose the shortest behaviour leading to $W^{-}$.
The third case corresponds, either to a configuration of $\VPass$, where $W^{-}$ is undefined, or to a $\ftPred$.
Notice that (possibly several) rank-lowering strategies always
exist.

%\paragraph{Example: strategies.} % (fold)
\begin{example}
\label{par:example_strategies_}
An example of a rank-lowering strategy on the automaton of
Fig.~\ref{fig:DP}~is: in~$(\aD_2,\aA)$, play~$\bot$
(as~$W^0_0$~has been reached); in~$\St$, play~$(0,\ship_2?)$;
in any other state, play~$(0,\zeta)$.
Note that Fig.~\ref{fig:DP} has not been completed in a
objective-centered tester. This~does not impact the strategies, as the
transition to the \fail states lead to a victory, but are not targeted
by the strategies.
\end{example}
% example example_strategies_ (end)

%MODIF: the strategies are a contribution in the general case.

It is worth noting that even in a more general setup where the models
are not equipped with $\zeta$-transitions, as~in~\cite{BSJK15},
rank-lowering strategies may still be useful: as~they~are defined on
the co-reachable set of $\Accept$, they~can still constitute test
cases, and the configurations where they are not defined are exactly
the configurations corresponding to a $\VFail$ verdict or to an
$\VInconc$ verdict, i.e., no~conclusions can be made since an accepting
configuration cannot be reached.

\subsection{Making rank-lowering strategies win} % (fold)
\label{sub:making_rank_lowering_strategies_win}
A rank-lowering strategy is generally not a winning strategy: it relies
on the implementation fairly exploring its different possibilities and
not repeatedly avoiding an enabled transition. In this section,
fair runs are introduced, and the rank-lowering strategies
are shown to be winning in this subset of the runs.

\begin{restatable}{lemma}{lemmakey}\label{lm:key}\label{lm:Akey}
If $\ot$ is repeatedly-observable, then
for all $\rho=((s_i,\gamma_i,s_{i+1}))_{i\in\mathbb{N}}\in \Ex(\g)$
ending with an infinite sequence of delays, we~have\footnote{In this expression, $\infof i\in\bbN, \ \phi(i)$ means that $\phi(i)$ is true for infinitely many integers.}
\[
\rho\in \Ex_{\fail}(\g)\vee \exists e\in E^\g,\
  \infof i\in\mathbb{N},e\in \enab(s_i).
\]
\end{restatable}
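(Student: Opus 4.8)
The plan is to argue by contraposition: assume $\rho = ((s_i,\gamma_i,s_{i+1}))_{i\in\bbN}$ ends with an infinite suffix of pure delay transitions, that $\rho \notin \Ex_{\fail}(\g)$, and that \emph{every} transition $e \in E^\g$ is enabled in only finitely many $s_i$; I will derive a contradiction with the repeated-observability of $\ot$. First I would fix an index $N$ past which all transitions of $\rho$ are delays, so that from $s_N$ onward the run stays in a single location $l$ and the valuation only grows: $s_{N+k} = (l, v + t_k)$ for an unbounded nondecreasing sequence $t_k$ (unbounded because infinite runs have infinite duration, by the standing hypothesis on TAs). Since $\rho$ is conformant, $s_N \in \Reach(\g^\ot) \setminus \VFail$, and in particular $s_N$ is a genuine reachable configuration of $\ot$ (not the $\Fail$ location, and satisfying the invariant — here trivial since $I^\ot \equiv \true$).

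Next I would invoke repeated-observability of $\ot$ at the configuration $s_N$: there is a partial run $\rho'$ from $s_N$ whose trace is not a pure delay, i.e.\ $\rho'$ eventually fires some transition $e^\star \in E^\ot$ carrying an observable action. Pick the first such transition along $\rho'$; before it, $\rho'$ performs only delays (and possibly internal actions, but an internal-action transition is itself \emph{some} $e \in E^\ot$, so I can equally take $e^\star$ to be the first discrete transition of $\rho'$ of any kind). Let $d = \duration$ of the prefix of $\rho'$ up to just before $e^\star$; then $e^\star$ is enabled at configuration $(l, v + d)$. Now here is the crux: the guard $g^\star$ of $e^\star$ is a clock constraint, and the set of valuations satisfying it, restricted to the ray $\{v + t : t \ge 0\}$, is a finite union of intervals. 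The assumption that $e^\star$ is enabled at only finitely many $s_i = (l, v+t_i)$ means the $t_i$ eventually leave $\{t : v + t \models g^\star\}$ and never return — so that set, intersected with the ray, is bounded, say contained in $[0, T^\star]$. But $d \le T^\star$ would not yet be a contradiction; what I actually need is a uniform bound. So I would instead argue: since there are only finitely many transitions $e \in E^\ot$, and each is enabled on the ray from $s_N$ only within a bounded initial segment $[0, T_e]$, the maximum $T = \max_e T_e$ is finite. Pick $k$ with $t_k > T$; then at $s_{N+k} = (l, v + t_k)$ \emph{no} discrete transition of $\ot$ is enabled, nor is any from any later $(l, v + t)$ with $t > T$. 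Hence from $s_{N+k}$ the only possible continuations are pure delays forever — which directly contradicts repeated-observability applied at $s_{N+k} \in \Reach(\g^\ot)\setminus\VFail$, since that would require a partial run from $s_{N+k}$ with a non-delay trace.

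The main obstacle, and the place to be careful, is the interplay between the two alphabets and the two automata: the lemma is stated over $\g = \g^\ot$ whose uncontrollable actions are the outputs, but repeated-observability is a property of $\ot$ as an OTAIO and it guarantees only a partial run with a \emph{non}-$\bbR_{\ge0}$ \emph{trace} — a trace may be nonempty because of an observable action, but internal $\tau$-transitions do not show up in the trace. I must therefore be precise that ``non-delay trace'' forces at least one \emph{observable} discrete transition, yet the bounded-segment argument I want to run needs to cover \emph{all} discrete transitions (internal ones included), because a run could in principle interleave delays and $\tau$'s indefinitely. The clean way around this is exactly the uniform bound $T = \max_{e \in E^\ot} T_e$ above: past $t > T$ on the ray, \emph{no} $e\in E^\ot$ is enabled at all, so not only is no observable action available, but no discrete move whatsoever is — killing any hope of a non-delay trace and contradicting repeated observability. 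I would also note in passing that $\VFail$ is closed under delays and discrete moves, and $\Reach$ is closed under the transitions of the run, so $s_{N+k}$ genuinely lies in $\Reach(\g^\ot)\setminus\VFail$, which is what repeated-observability of ``$\ot$ out of $\VFail$'' requires. Assembling these pieces yields the disjunction in the statement.
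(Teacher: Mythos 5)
Your proof is correct and follows essentially the same route as the paper's: assume no transition is enabled infinitely often along the (eventually delay-only, hence unboundedly aging) suffix, locate a point past which no discrete transition can ever be enabled again, and contradict repeated observability of $\ot$ outside $\VFail$. Your extra step --- bounding each guard's satisfaction set on the ray $\{(l,v+t)\}_{t\geq 0}$ so as to obtain a uniform bound $T$ covering \emph{all} time successors rather than only the sampled configurations $s_i$ --- makes explicit a point the paper's proof passes over when it asserts that there is no time successor of $\last(\rho')$ with an enabled transition.
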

This lemma ensures that we cannot end in a situation where no
transitions can be taken, forcing the system to delay
indefinitely. It~will be used with the support of fairness. In order to
introduce our notion of fairness, we define the infinite support of a
run.
\begin{defn}{}{InfiniteSupport}
  Let  $\rho$ be an infinite run, its \emph{infinite support} 
 $\Inf(\rho)$ is the set of regions appearing infinitely often in $\rho$.
\begin{xalignat*}1
	\Inf((s_i,\gamma_i,s_{i+1})_{i\in\mathbb{N}})  &=\{r \mid \infof  i\in\mathbb{N},\ s_i\in r \vee {} \\
        &\qquad\qquad
  (\gamma_i\in\mathbb{R}_{\geq0}\wedge \exists s'_i\in r,\ \exists \delta_i < \gamma_i,\ s_i\xrightarrow{\delta_i}s'_i)\}
	% \Inf_E((s_i,\gamma_i,s_{i+1})_{i\in\mathbb{N}}) &=\{e \mid \infof i\in\mathbb{N}, e=\gamma_i\}
\end{xalignat*}
\end{defn}

The notion of enabled transitions and delay transitions are extended
to regions as follows: for a region $r$, we~let $\enab(r)=\enab(s)$
for any~$s$ in~$r$, and write $r\xrightarrow{\mathtt{t}} r'$ for all
time-successor region~$r'$ of~$r$.

\begin{defn}{}{FairRun}
An infinite run $\rho$ in a TGA
$\g=(L,l_0,\Sigma_u\uplus\Sigma_c,X,I,E)$ (with timed transitions system
$\mT = (S,s_0,\Gamma,\mathord\rightarrow_\mT)$) is said to be \emph{fair} when:
\begin{align*}
&\forall e\in E, (\act(e)\in\Sigma_u\Rightarrow (\exists r \in \Inf(\rho), r\xrightarrow{e}r'\Rightarrow r'\in \Inf(\rho)))
\ \wedge{}
\\
&\forall r\in \Inf(\rho), \exists \gamma\in (\enab(r)\cap\{e \mid \act(e)\in\Sigma_c\})\cup
\{\mathtt{t}\}
), r\xrightarrow{\gamma}r' \wedge r'\in \Inf(\rho)
\end{align*}
We note $\Fair(\g)$ the set of fair runs of $\g$.
\end{defn}
Fair runs model restrictions on the system runs corresponding to
strategies of the system. The first part of the definition assures that any infinitely-enabled action of the implementation will be taken infinitely many times, while the second part ensures that the implementation
will infinitely often let the tester play, by ensuring that a delay or controlable action will be performed.
It~matches the "strong fairness" notion used in
model checking.  Restricting to fair runs is sufficient to ensure a winning
execution when the tester uses a rank-lowering strategy. Intuitively,
combined with Lemma~\ref{lm:key} and the repeated-observability assumption,
it assures that the system will keep providing outputs until a verdict
is reached, and allows to show the following property.

\begin{restatable}{proposition}{propwin}\label{prop:win}\label{prop:Awin}
%\begin{prop}{}{win}
Rank-lowering strategies are winning on~$\Fair(\g)$ (i.e., all fair
outcomes are winning).
\end{restatable}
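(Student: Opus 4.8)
\textbf{Proof plan for Proposition~\ref{prop:win}.}
The plan is to show that along any fair outcome~$\rho$ of a rank-lowering strategy~$f$ from~$s_0^\g$, a \VFail or \VPass configuration is eventually reached. Suppose for contradiction that $\rho$ is fair, lies in $\Outcome(s_0^\g,f)$, and never reaches a verdict. Then by Proposition~\ref{coverage} every configuration visited lies in the hierarchy $(W^j_i)_{j,i}$ and has a well-defined rank $r(s_i)=(j_i,i_i)$. The key invariant to establish is that the rank is non-increasing along $\rho$ and, under fairness, must strictly decrease infinitely often unless a verdict is hit — which is impossible since $\lo$ is a well-order on $\mathbb{N}^2$ (Proposition on $\po$ being a partial order, together with well-foundedness of the lexicographic order). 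This yields the contradiction.

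The core of the argument is a case analysis on $f(\rho')=(\delta,a)$ at each prefix $\rho'$ of $\rho$ with $\last(\rho')=s=(l,v)$, following the three cases of Definition~\ref{RankLowerStrat}. First I would check that \emph{uncontrollable moves cannot increase the rank}: by construction of $\pi$, from any $s\in W^j_{i+1}\setminus W^j_i$ the timed predecessor is taken while avoiding $\Pred_{\Sigma_u}(\overline{W^j_i})$, so any output transition available before the chosen delay lands in $W^j_i$ (rank strictly lower) or, in the $\ftPred$ component, lands in $W^{-}(s)$ or in \VFail; in the $W^{j+1}_0$ case a similar one-step-to-$W^j_\infty$ property holds. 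Hence every uncontrollable step in $\Outcome(s,f)$ strictly lowers the rank or produces \VFail. Second, in the first two cases of the strategy definition, the controllable action (resp. the pure delay) is explicitly chosen so that the resulting configuration lies in $W^{-}(s)$, again strictly lowering the rank. So the only way the rank fails to strictly decrease at a given step is the third case, where $f$ plays a maximal delay and $s\in\VPass$ (excluded, that is a verdict) or $s$ is in an $\ftPred$ component.

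The remaining obstacle — and I expect this to be the main difficulty — is to rule out $\rho$ getting \emph{stuck forever in $\ftPred$-type configurations playing maximal delays without ever triggering an uncontrollable move or a verdict}. Here is where Lemma~\ref{lm:key} and fairness enter. If from some point on $f$ only plays $(\delta,\bot)$ with maximal $\delta$ and no discrete transition is ever taken, then $\rho$ ends with an infinite sequence of delays; the repeated-observability of $\ot$ (inherited, as noted after Definition~\ref{Objective centered tester}) lets us invoke Lemma~\ref{lm:key} to conclude $\rho\in\Ex_{\fail}(\g)$ (contradiction with ``no verdict'') or that some transition $e$ is enabled in infinitely many $s_i$, hence in some region $r\in\Inf(\rho)$. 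In the latter case, $\ftPred$ configurations by definition force the choice ``output to $W^{-}(s)$ or to \VFail''; fairness (first clause) then guarantees that such an enabled uncontrollable $e$ is actually taken from $r$ infinitely often, and its target is in $\Inf(\rho)$ — but that target has strictly smaller rank, contradicting that the rank has stabilised. The only other possibility is that $f$ leaves the third case, i.e. $s$ re-enters $\tPred(\Pred_{\Sigma_c}(W^-(s)))$ or $\tPred(W^-(s))$, and then the strategy itself forces a rank-lowering move, again via fairness (second clause) ensuring the tester's controllable/delay move is eventually performed. In all cases the rank strictly decreases infinitely often along $\rho$, contradicting well-foundedness of $\lo$. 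Therefore every fair outcome reaches \VPass or \VFail, i.e. is winning. \qed
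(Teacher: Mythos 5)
There is a genuine gap, and it sits exactly where the difficulty of this proposition lies: your claim that \emph{every uncontrollable step strictly lowers the rank or produces \VFail}. That claim is correct inside a level (for $s\in W^j_{i+1}$ the operator $\pi$ uses $\tPred(\cdot,\Pred_{\Sigma_u}(\overline{W^j_i}))$, so any output fired during the prescribed delay does land in $W^j_i$) and correct for the $\ftPred$ component, but it is false at the control losses: $W^{j+1}_0=\tPred(W^j_{\infty}\cup\Pred_{\Sigma}(W^j_{\infty}))$ uses $\tPred$ with an \emph{empty} avoidance set, and membership in $\Pred_{\Sigma}(W^j_{\infty})$ only says that \emph{some} transition leads into $W^j_{\infty}$ --- the implementation is free to fire a different output and land at an arbitrarily higher rank. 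This is precisely what the paper calls a control loss (``the implementation might take an output transition leading to an undesirable configuration (higher on the hierarchy)''). So the rank is not non-increasing along $\rho$, it need not stabilise, and ``the rank strictly decreases infinitely often'' is perfectly compatible with well-foundedness once increases are also possible infinitely often. Your final contradiction therefore does not close.

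The paper's proof avoids monotonicity altogether: it picks a region $\reg\in\Inf(\rho)$ of \emph{minimal rank among the regions visited infinitely often}, proves as a separate lemma (using the second fairness clause) that the delay prescribed by the strategy from a region of $\Inf(\rho)$ again lands in a region of $\Inf(\rho)$, and then shows in each of the three cases of the strategy that fairness forces a region of strictly smaller rank into $\Inf(\rho)$, contradicting minimality. In the control-loss/\,$\ftPred$ case this works because all regions between $\reg$ and the maximal delay successor belong to $\Inf(\rho)$, one of them has an uncontrollable edge into $W^{-}$, and the first fairness clause forces that edge's target into $\Inf(\rho)$. Your handling of the third case already contains most of this fairness argument; the repair is to drop the (false) global monotonicity claim and run your case analysis from a minimal-rank witness in $\Inf(\rho)$ instead of from an alleged stabilisation point.
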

%\end{prop}
%MODIF: explication qu'on a un algorithme et complexité.
Under the hypothesis of a fair implementation, we thus have identified
a test-case generation method, starting from the specification with
restarts and the test purpose, and constructing a test case as a
winning strategy on the game created from the objective-centered
tester. The complexity of this method is exponential in the size of $\mdp$. 
More precisely: 
\begin{restatable}{proposition}{complexity}\label{prop:com}\label{prop:Acom}
Given a deterministic product $\mdp$, $\ot$ can be linearly computed from $\mdp$, and the construction of a strategy relies on the construction of the $W^j_i$ and is hence exponential in $X^\mdp$ and linear in $L^\mdp$.
\end{restatable}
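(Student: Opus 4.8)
The plan is to establish the two assertions of Proposition~\ref{prop:Acom} separately: the linear construction of $\ot$ from $\mdp$, and the exponential-in-clocks, linear-in-locations bound on the strategy construction. For the first part, I~would return to Definition~\ref{d:Objective centered tester}: the location set of $\ot$ is $L^\mdp \cup \{\Fail\}$, i.e.\ one extra location, and the transition set $E^\ot$ adds, for each location~$l$ and each output action~$a \in \Sigma_!^\mdp$, the transitions $(l,g,a,\emptyset,\Fail)$ for $g$ ranging over the complement guards~$\overline G_{a,l}$, plus the self-loops on~$\Fail$. The point to stress is that $\overline G_{a,l}$ is the complement (within the valuation space) of the union of guards of $a$-transitions from~$l$; although this complement is non-convex and may require several atomic guards to represent, its description has size polynomial (indeed linear, up to the fixed number of clocks and constants) in the description of the $a$-transitions from~$l$ in~$\mdp$. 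Summing over the at-most $|L^\mdp|\cdot|\Sigma_!^\mdp|$ pairs yields a construction linear in~$|\mdp|$, and $I^\ot \equiv \true$ is trivial to produce.

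For the second part, I~would invoke the observations already made in the excerpt: the strategy construction reduces to computing the hierarchy $(W^j_i)_{j,i}$ of Definition~\ref{d:ConfigHierarchy}, from which the rank of each configuration (Definition~\ref{d:Rank}) and hence a rank-lowering strategy (Definition~\ref{d:RankLowerStrat}) can be read off with only local, region-level bookkeeping. Each $W^j_{i+1}$ is obtained from $W^j_i$ by one application of~$\pi$, which is a fixed combination of $\Pred_{\Sigma'}$, $\tPred$ and $\ftPred$ operations; by the cited results~\cite{CDF+05} these are computable on clock regions using $\Pred$ and elementary set operations, so each step costs time polynomial in the number of regions. The key finiteness facts, also already noted: for fixed~$j$ the sequence $(W^j_i)_i$ is non-decreasing on the (finite) region partition, hence stabilises at $W^j_\infty$ after finitely many steps; and by Proposition~\ref{coverage} the whole of $\Reach(\g^\ot)\setminus\VFail$ is eventually covered, so the outer index~$j$ also stabilises after finitely many rounds. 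Since the number of regions of a timed automaton is exponential in the number of clocks~$X^\mdp$ (and in the magnitude of the constants, which are part of the input) and linear in the number of locations~$L^\mdp$, and the whole computation performs polynomially many region-level operations, the total cost is exponential in~$X^\mdp$ and linear in~$L^\mdp$. Finally, I~would note that $\ot$ has the same clocks as~$\mdp$ and only one more location, so carrying the analysis out on~$\ot$ rather than~$\mdp$ changes nothing asymptotically; extracting the actual strategy from the $W^j_i$ (choosing least/greatest delays within a region, choosing a controllable action witnessing membership in $\tPred(\Pred_{\Sigma_c}(W^-(s)))$) is again region-local and absorbed in the same bound.

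The main obstacle I~anticipate is making the "linear" claim for $\ot$ fully honest: one must be careful that representing the non-convex complement guards $\overline G_{a,l}$ does not blow up. The clean way to phrase it is that the number of new transitions and the size of their guards are bounded by a polynomial (linear, for fixed clock number) in the size of the outgoing $a$-transitions of~$l$, so the total added description is linear in~$|\mdp|$; alternatively, if one wants to avoid even this, one can observe that the $\Fail$ location need not be materialised at all for the strategy computation (as remarked in Example~\ref{par:example_strategies_}), since reaching it is always winning and never targeted — so the strategy construction may legitimately be performed on~$\mdp$ directly, and the linear-construction claim for~$\ot$ is only relevant if one insists on building the full tester. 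The remaining steps — bounding $\pi$-iterations by region-partition finiteness, bounding the outer iteration by Proposition~\ref{coverage}, and the standard region-count bound — are routine given the cited machinery.
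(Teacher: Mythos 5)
Your proof is correct and follows essentially the same route as the paper, which in fact offers no separate proof of this proposition and relies on the remark in Section~4.1 that $(W^j_i)_{j,i}$ is a non-decreasing sequence of region-definable sets, computable via $\Pred$ and $\tPred$ on the region graph, whose size is exponential in the number of clocks (and the encoding of constants) and linear in the number of locations, with termination of the outer iteration guaranteed by Proposition~\ref{coverage}. Your caveat about representing the non-convex complements $\overline G_{a,l}$ is a genuine and welcome refinement of the paper's unqualified ``linear'' claim for the construction of $\ot$, and your observation that the strategy computation can be carried out on $\mdp$ directly matches the paper's own remark (in the example of Section~4.1) that the completion towards $\Fail$ does not affect the strategies.
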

Note that if $\mdp$ is obtained from $\mP$ by the game presented in \cite{BSJK15}, then $L^\mdp$ is doubly exponential in the size of $X^\mS\uplus X^\TP \uplus X^\mdp$ (notice that in the setting of~\cite{BSJK15}, $X^\mdp$~is a parameter of the algorithm).
% subsection making_rank_lowering_strategies_win (end)

\subsection{Properties of the test cases} % (fold)
\label{sub:properties_of_the_test_cases}
Having constructed strategies for the tester, and identified a scope
of implementation behaviours that allows these strategies to enforce a
conclusive verdict, we now study the properties obtained by the
test generation method presented above. We~call \emph{test case} a
pair~$(\g,f)$ where $\g$ is the game corresponding to the
objective-centered tester $\ot$, and $f$ is a rank-lowering strategy on~$\g$.
We~note~$\TC(\mS,\TP)$ the set of possibles test cases generated from a specification~$\mS$ and a test purpose~$\TP$, and~$\TC(\mS)$ the test cases for any test purpose.
Recall that it is assumed that the test purposes associated with a specification are restricted to those leading to a determnizable product.
 \emph{Behaviours} are defined as the possible outcomes of a test case combined with an
implementation, and model their parallel composition.
\begin{defn}{}{Behaviour}
Given a test case $(\g,f)$ and an implementation $\mI$,
their \emph{behaviours} are the runs
$((s_i,s'_i),(e_i,e'_i),(s_{i+1},s'_{i+1}))_i$ such that
$((s_i,e_i,s_{i+1}))_i$ is an outcome of~$(\g,f)$,
$((s'_i,e'_i,s'_{i+1}))_i$ is a run of~$\mI$, and for all~$i$, either
$e_i=e'_i$ if $e_i\in\mathbb{R}_{\geq 0}$ or $\act(e_i)=\act(e'_i)$
otherwise.  We~write $\Behaviour(\g,f,\mI)$ for the set of behaviours
of the test case~$(\g,f)$ and of the implementation~$\mI$.
\end{defn}

We say that an implementation~$\mI$ fails a
test case $(\g,f)$, and note $\mI \fails (\g,f)$, when there exists
a run in $\Behaviour(\g,f,\mI)$ that reaches~\VFail.
Our~method is sound, that~is, a~conformant implementation cannot be detected as faulty.
%\begin{prop}{}{Soundness}
\begin{restatable}{proposition}{propSoundness}\label{prop:Soundness}
The test-case generation method is \emph{sound}:
for any specification~$\mS$, it~holds
\[
\forall \mI\in\mI(\mS),\
\forall(\g,f)\in\TC(\mS),\ (\mI \fails (\g,f)\Rightarrow
\neg(\mI \tioco \mS)).
\]
\end{restatable}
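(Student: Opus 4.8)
The plan is to convert a failing behaviour of the test case into an explicit counterexample to $\tioco$. Fix a specification~$\mS$, an implementation $\mI\in\mI(\mS)$ and a test case $(\g,f)\in\TC(\mS)$ with $\mI\fails(\g,f)$. Recall that $\g=\g^\ot$ is the game of the objective-centered tester~$\ot$ built from the \emph{exact} determinization~$\mdp$ of $\mP=\mS\times\TP$, and that, by completeness of~$\TP$ and exactness of the determinization, $\Traces_{\conf}(\ot)=\Traces(\mdp)=\Traces(\mP)=\Traces(\mS)$. The first step would be to record, from this trace equality together with determinism of~$\mdp$ and the fact that $\mdp$ is fully observable (no internal transitions), that for every $\sigma\in\Traces(\mS)$ the set $\mdp\after\sigma$ is a singleton and $\Sout(\mdp\after\sigma)=\Sout(\mS\after\sigma)$. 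Both inclusions follow by extending a run of trace~$\sigma$ in one automaton by the relevant output action, or by a trailing delay: the extended run has a trace that again lies in $\Traces(\mdp)=\Traces(\mS)$, and replaying that trace in the other automaton exhibits exactly the same enabled output or admissible delay right after~$\sigma$ (absence of internal transitions in~$\mdp$ is what forces such a replay to perform a pure delay from~$\mdp\after\sigma$).

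Next I would extract a witnessing trace. Since $\mI\fails(\g,f)$, pick a behaviour $\beta\in\Behaviour(\g,f,\mI)$ and its shortest prefix~$\beta'$ whose last configuration has its $\ot$-component in~$\VFail$; this prefix exists and has at least one step, since the initial configuration lies in neither part of~$\VFail$. By minimality, the $\ot$-run along~$\beta'$ without its last transition never meets~$\VFail$: it uses no edge leading to~$\Fail$ and only visits configurations satisfying~$I^\mdp$, hence it is in fact a run of~$\mdp$. Let $\sigma$ be its trace, so $\sigma\in\Traces_{\conf}(\ot)=\Traces(\mS)$ and, by determinism, $\mdp\after\sigma=\{(l,v)\}$ for the test-case configuration $(l,v)$ reached just before entering~$\VFail$. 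Since a behaviour keeps the two components in lockstep on delays and observable actions (and $\ot$ has no internal action), the implementation run along the same prefix has trace~$\sigma$ too, so its last configuration~$s'$ lies in~$\mI\after\sigma$.

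It then remains to inspect the last transition of~$\beta'$. By construction of~$E^\ot$ and~$\VFail$ --- using that the determinization yields transitions that respect target invariants, so that from a configuration of~$\mdp$ one can enter~$\VFail$ only via a $\Fail$-edge or via a delay leaving some~$I^\mdp(l)$ --- there are two cases. \emph{(a)}~The last step is a discrete move from~$(l,v)$ into~$\Fail$ labelled by some $a\in\Sigma_!$, with $v$ satisfying a guard of~$\overline G_{a,l}$, so that $(l,v)$ has no available $a$-transition in~$\mdp$; the matching implementation step is then an $a$-transition from~$s'$, whence $a\in\Sout(\mI\after\sigma)$ while $a\notin\Sout(\mdp\after\sigma)=\Sout(\mS\after\sigma)$. \emph{(b)}~The last step is a delay $(l,v)\xrightarrow{\delta}(l,v+\delta)$ with $v\models I^\mdp(l)$ but $v+\delta\not\models I^\mdp(l)$; the matching implementation step is a pure delay of duration~$\delta$ from~$s'$, so $\delta\in\elapse(\mI\after\sigma)\subseteq\Sout(\mI\after\sigma)$, whereas $\elapse(\mdp\after\sigma)=\{t\geq 0\mid v+t\models I^\mdp(l)\}$ does not contain~$\delta$, so $\delta\notin\Sout(\mdp\after\sigma)=\Sout(\mS\after\sigma)$. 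In either case $\Sout(\mI\after\sigma)\not\subseteq\Sout(\mS\after\sigma)$ for a trace $\sigma\in\Traces(\mS)$, which is exactly $\neg(\mI\tioco\mS)$.

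I expect the first step to be the real obstacle: establishing that passing to the exact determinization~$\mdp$ changes neither the set of traces nor, after each trace, the set of admissible outputs and delays, and --- relatedly --- pinning down that the only ways a behaviour can push the test case into~$\VFail$ are an unspecified output (caught by a $\Fail$-edge) and an unspecified delay (caught by leaving some~$I^\mdp(l)$). The latter needs the determinization of~\cite{BSJK15} to produce invariant-respecting transitions, so that a discrete move cannot ``jump'' into an invariant-violating configuration. The lockstep bookkeeping in the definition of~$\Behaviour$, used to turn matched transitions into membership in $\mI\after\sigma$ and, in case~\emph{(b)}, into a single delay of the same duration, is routine but should be written out carefully.
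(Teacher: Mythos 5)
Your proposal is correct and follows essentially the same route as the paper's proof: extract the first entry of a failing behaviour into $\VFail$, case-split on whether it is an invariant violation (unspecified delay) or a move to $\Fail$ (unspecified output) to get $\Sout(\mI\after\sigma)\not\subseteq\Sout(\mdp\after\sigma)$, and transfer non-conformance back to $\mS$ via the trace equalities $\Traces(\mdp)=\Traces(\mP)=\Traces(\mS)$. You simply spell out more explicitly the lockstep bookkeeping and the fact that trace equivalence yields equal output/delay sets after each trace, which the paper leaves implicit.
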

%\end{prop}
The proofs of this property and the following one are based on the exact correspondance between \VFail and the faulty behaviours of $\mS$, and use the trace equivalence of the different models ($\mdp$, $\mP$ and $\mS$) to conclude. As they exploit mainly the game structure, fairness is not used.
\par We define the notion of outputs after a trace in a behaviour, allowing
to extend $\tioco$ to these objects and to state a strictness
property. Intuitively, when a non-conformance appears it should be detected.
\begin{defn}{}{OutputsAfterBehaviour}
Given a test case $(\g,f)$ and an implementation $\mI$, for a trace~$\sigma$:
\begin{multline*}
  \Sout(\Behaviour(\g,f,\mI)\after\sigma)=\\
  \{a\in \Sigma_!\cup\mathbb{R}_{\geq0}\mid \exists \rho\in
  \Behaviour(\g,f,\mI),\ \Trace(\rho)=\sigma\cdot a\}
\end{multline*}
\end{defn}

%\begin{prop}{}{Strictness}
\begin{restatable}{proposition}{propStrictness}
The test generation method is \emph{strict}:
given a specification~$\mS$,
\[
\forall \mI\in\mI(\mS),\ \forall(\g,f)\in\TC(\mS),\ 
\neg(\Behaviour(\g,f,\mI)\tioco \mS)\Rightarrow \mI \fails (\g,f)
\]
\end{restatable}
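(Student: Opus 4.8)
The plan is to prove strictness by contraposition: assuming $\neg(\mI \fails (\g,f))$, I would show $\Behaviour(\g,f,\mI) \tioco \mS$. Suppose for contradiction that some behaviour $\rho \in \Behaviour(\g,f,\mI)$ and some trace $\sigma$ witness a violation, i.e. there is $a \in \Sout(\Behaviour(\g,f,\mI)\after\sigma) \setminus \Sout(\mS \after \sigma)$. First I would observe that $\sigma$ must be a trace of $\mS$: if $\sigma \notin \Traces(\mS)$, then already a proper prefix $\sigma'$ of $\sigma$ carries an unspecified output, and since the tester's component of any behaviour is a run of $\g^\ot$ which is $\mdp$ completed with $\Fail$, this output is precisely one of the transitions leading to $\VFail$ (using $\Traces(\mdp) = \Traces_{\conf}(\ot) = \Traces(\mS)$, established in the excerpt); hence $\rho$ would reach $\VFail$, contradicting $\neg(\mI\fails(\g,f))$. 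So $\sigma \in \Traces(\mS) = \Traces(\mdp)$.

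Next, since $\sigma \in \Traces(\mS)$ and $a \notin \Sout(\mS \after \sigma)$, the completion of $\mdp$ into $\ot$ guarantees there is an $a$-transition to $\Fail$ available after $\sigma$ in $\g^\ot$ (this is exactly what $\overline G_{a,l}$ captures when $a \in \Sigma_!$, and similarly an $a$-delay leading past the would-be invariant reaches $\VFail$). The key remaining point is that the tester's strategy $f$ does not prevent this output from being produced: the outcome of $f$ always contains every uncontrollable (output) move and every delay up to the one chosen by $f$ — this is built into Definition~\ref{Outcome}. Concretely, taking the behaviour $\rho$ witnessing $a \in \Sout(\Behaviour(\g,f,\mI)\after\sigma)$, its projection onto the tester is an outcome of $f$ that has already followed $\sigma$; appending the implementation's $a$-move (which is uncontrollable, hence always allowed in the outcome) yields an outcome of $f$ reaching a configuration where the matching $a$-transition of $\g^\ot$ goes to $\Fail$. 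This extended run is still a behaviour of $(\g,f)$ and $\mI$ (the implementation performs $a$, the tester synchronises on $a$ into $\Fail$), and it reaches $\VFail$, so $\mI \fails (\g,f)$ — contradiction.

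I would handle the two sub-cases $a \in \Sigma_!$ and $a \in \mathbb{R}_{\geq 0}$ uniformly via this argument, noting that for a delay $a$ the relevant $\VFail$ configuration is of the form $\{l\}\times(\mathbb{R}_{\geq 0}^{X_p}\setminus I^\mdp(l))$, and that $f$ only ever restricts delays from above (it picks one delay value), so any shorter delay that the implementation insists on — in particular one still inside the $\sigma$-reached region but violating the original $\mdp$-invariant — is permitted in the outcome. Throughout I would lean on the remark in the excerpt that the $\Fail$-reaching traces of $\ot$ are exactly $\Traces_{\fail}(\ot) = \Traces(\ot)\setminus\Traces(\mS)$, and on trace-equivalence of $\mdp$, $\mP$ and $\mS$ to move freely between them; as noted before Proposition~\ref{prop:Soundness}, fairness plays no role here since the argument is purely about the game/outcome structure.

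The main obstacle I anticipate is bookkeeping the synchronisation between the behaviour (a run of the product of the outcome and the implementation) and the outcome of $f$ itself: one must verify carefully that the one-step extension by the offending output $a$ is simultaneously (i) a legal continuation of the outcome — which holds because outputs are uncontrollable and Definition~\ref{Outcome} closes outcomes under all uncontrollable successors and under all delays $\delta' \le \delta$ — and (ii) a legal continuation of the implementation run, which holds by the choice of $\sigma$ and $a$ as a genuine witness in $\Sout(\Behaviour(\g,f,\mI)\after\sigma)$. Once this alignment is made precise, reaching $\VFail$ is immediate from the definition of $\ot$, and the contradiction with $\neg(\mI \fails (\g,f))$ closes the proof.
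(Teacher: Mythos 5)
Your proposal is correct and follows essentially the same route as the paper's proof: extract the witness $\sigma\in\Traces(\mS)$ and $a\in\Sout(\Behaviour(\g,f,\mI)\after\sigma)\setminus\Sout(\mS\after\sigma)$, use the trace equalities $\Traces(\mS)=\Traces(\mP)=\Traces(\mdp)=\Traces_{\conf}(\ot)$ together with the output-completion and invariant removal in $\ot$ to conclude that $\sigma\cdot a$ is a $\Fail$-reaching trace, and observe that the witnessing behaviour therefore reaches $\VFail$. Your extra bookkeeping (showing $\sigma\in\Traces(\mS)$, which is already built into the definition of a $\tioco$ violation, and checking that outcomes are closed under uncontrollable moves and shorter delays) only makes explicit what the paper leaves implicit; as you note, neither fairness nor the specific rank-lowering strategy is needed.
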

%\end{prop}

This method also enjoys a precision property: 
traces leading the test case to \VPass are exactly traces conforming 
to the specification and accepted by the test purpose.
The proof of this property uses the exact encoding of the $\Accept$ states and the definition of $\VPass$. As the previous two, it then propagates the property through the different test artifacts.
%\begin{prop}{}{Precision}
\begin{restatable}{proposition}{propPrecision}
The test case generation method is \emph{precise}:
for any specification~$\mS$ and test purpose~$\TP$ it can be stated that
\begin{multline*}
  \forall (\g,f)\in\TC(\mS,\TP),\forall \sigma\in \Traces(\Outcome(s_0^\g,f)), 
  \\\g\after \sigma \in \VPass
  \ \Leftrightarrow \
	(\sigma\in \Traces(\mS)\wedge \TP \after \sigma\cap \Accept^\TP \neq \emptyset)
\end{multline*}
\end{restatable}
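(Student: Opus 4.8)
The plan is to prove the precision property by chaining equalities through the three test artifacts $\g^\ot$, $\mdp$, $\mP$, and $\mS$, relying on the construction of $\VPass$ from $\Accept^\mdp$, the trace-preservation of the exact determinization, the completeness of $\TP$, and the definition of $\Accept^\mP$.

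First I would reduce the statement about outcomes of the strategy to a statement about traces of $\g^\ot$. Since $\g^\ot$ is deterministic (it is built from the deterministic TAIO $\mdp$ by adding the $\Fail$ location and completing on outputs, which keeps determinism), for any trace $\sigma$ the set $\g^\ot\after\sigma$ is a singleton, so $\g^\ot\after\sigma\in\VPass$ is well-defined. The key observation is that for $\sigma\in\Traces(\Outcome(s_0^\g,f))$, reaching a $\VPass$ configuration is equivalent to $\sigma$ being a conformant trace of $\ot$ (hence a trace of $\mdp$) whose target location lies in $\Accept^\mdp$: indeed, $\VPass=\bigcup_{l\in\Accept^\mdp}(\{l\}\times I^\mdp(l))$, and on conformant runs the reached configuration of $\g^\ot$ coincides with that of $\mdp$ after the same trace (the added $\Fail$-transitions are exactly those leaving $\Traces(\mdp)$, as recorded in Section~\ref{sub:accounting_for_failure}). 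So $\g^\ot\after\sigma\in\VPass \Leftrightarrow \sigma\in\Traces(\mdp)\wedge \mdp\after\sigma\subseteq\Accept^\mdp\times\bbR_{\geq0}^{X_p}$, and by determinism of $\mdp$ the latter is just $\mdp\after\sigma$ landing in an accepting location.

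Next I would push this equivalence back across the determinization and the product. By the standing hypothesis that $\mP=\mS\times\TP$ is exactly determinized by $\mdp$, we have $\Traces(\mdp)=\Traces(\mP)$, and — as stated in the excerpt — the traces leading to $\Accept^\mdp$ are exactly those leading to $\Accept^\mP$. Hence $\g^\ot\after\sigma\in\VPass$ is equivalent to: $\sigma\in\Traces(\mP)$ and some run of $\mP$ with trace $\sigma$ ends in $\Accept^\mP=L^\mS\times\Accept^\TP$. Finally, using $\Seq(\mP)=\Seq(\mS)$ (because $\TP$ is complete and observes exactly the clocks of $\mS$), we get $\Traces(\mP)=\Traces(\mS)$, and decomposing a run of $\mP$ into its projections on $\mS$ and on $\TP$ shows that reaching $L^\mS\times\Accept^\TP$ along trace $\sigma$ is equivalent to $\sigma\in\Traces(\mS)$ together with $\TP\after\sigma\cap\Accept^\TP\neq\emptyset$. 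Combining the two equivalences yields the claim. One small point to handle carefully is that the quantification is over $\sigma\in\Traces(\Outcome(s_0^\g,f))$ rather than all of $\Traces(\g^\ot)$; but since $\VPass\subseteq\Reach(\g^\ot)\setminus\VFail$ and every trace in $\Traces(\g^\ot)$ that stays out of $\VFail$ is realizable as (a prefix of) an outcome of the rank-lowering strategy up to the point it is defined, the restriction does not lose any $\VPass$-reaching trace; conversely any outcome trace is a trace of $\g^\ot$, so both inclusions survive the restriction.

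The main obstacle, I expect, is the careful bookkeeping in the first step: making precise the correspondence between configurations of the game $\g^\ot$ and configurations of $\mdp$ along conformant traces, and verifying that no trace reaching $\VPass$ was accidentally diverted to $\Fail$ by the output-completion (and conversely that $\VFail$ is not reachable along a trace of $\mdp$). This is essentially the content already asserted around Definition~\ref{ConformingExecutions} ($\Traces(\mdp)=\Traces_{\conf}(\ot)$ and the added traces correspond exactly to runs reaching $\VFail$), so once that is invoked the remaining steps are routine applications of trace-equivalence and the product/determinization properties, with no use of fairness or of the rank hierarchy.
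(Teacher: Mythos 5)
Your proof is correct and follows essentially the same route as the paper's: it chains the equivalence through $\g^\ot$, $\mdp$, $\mP$ and $\mS\times\TP$ using determinism of $\g$ outside $\VFail$, the definition of $\VPass$ from $\Accept^\mdp$, exactness of the determinization, and $\Accept^\mP=L^\mS\times\Accept^\TP$ together with $\Traces(\mP)=\Traces(\mS)$. The paper's own argument is just a terser version of the same chain (and, like yours, notes that neither fairness nor the specific strategy is used).
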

%\end{prop}

Lastly, this method is exhaustive in the sense that for any non-conformance, there exist a test case that allows to detect it, under fairness assumption.

%\begin{prop}{}{Exhaustiveness}
\begin{restatable}{proposition}{propExhaust}
The test generation method is \emph{exhaustive}:
for any exactly determinizable specification $\mS$
and any implementation $\mI\in\mI(\mS)$ making fair runs 
\[
\neg(\mI\tioco\ \mS)\Rightarrow \exists (\g,f)\in \TC(\mS), \mI \fails
(\g,f).
\]
\end{restatable}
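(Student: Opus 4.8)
The plan is to prove exhaustiveness by exhibiting, for a given non-conforming implementation $\mI$, a test purpose $\TP$ whose associated test case catches the non-conformance. First I would unfold the definition of $\neg(\mI\tioco\mS)$: there is a trace $\sigma\in\Traces(\mS)$ and an action or delay $a\in\Sout(\mI\after\sigma)\setminus\Sout(\mS\after\sigma)$. The key idea is to build a test purpose that "aims at" exactly this faulty trace, i.e. an OTAIO $\TP_\sigma$ that accepts (in the sense of reaching $\Accept^\TP$) precisely after the trace $\sigma$ has been observed on the specification. Since $\sigma$ is a finite trace, such a $\TP_\sigma$ can be built as a finite "path" automaton reading $\sigma$, completed (with the $\oth$-style transitions) so as to satisfy the completeness requirement of Definition~\ref{def:TestPurpose}, and with $\zeta$-transitions returning to the initial state. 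I would need to check that the product $\mP_\sigma=\mS\times\TP_\sigma$ falls within the determinizable class assumed in the paper — this is the point where I would invoke the hypothesis that $\mS$ is \emph{exactly determinizable}, arguing that tracking a fixed finite trace does not break determinizability (or, more carefully, that one may choose $\TP_\sigma$ inside one of the stable subclasses mentioned, e.g. by making its clock constraints integer-reset or event-recording compatible with $\mS$).

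Next I would run the test-synthesis pipeline of Sections~3--4 on $(\mS,\TP_\sigma)$: form $\mP_\sigma$, determinize it to $\mdp_\sigma$ (exact, so $\Traces(\mdp_\sigma)=\Traces(\mP_\sigma)=\Traces(\mS)$ and the traces reaching $\Accept$ are preserved), build the objective-centered tester $\ot_\sigma$, and pick a rank-lowering strategy $f_\sigma$ — legitimate because $\VPass$ is reachable, as $\sigma\in\Traces(\mS)$ guarantees an accepting configuration is co-reachable. This yields a test case $(\g_\sigma,f_\sigma)\in\TC(\mS)$. Then I would argue, using the hypothesis that $\mI$ makes only fair runs, that the behaviour of $(\g_\sigma,f_\sigma)$ against $\mI$ will expose the fault: by Proposition~\ref{prop:win} (rank-lowering strategies are winning on $\Fair(\g)$) every fair outcome is winning, hence reaches $\VPass$ or $\VFail$. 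The remaining work is to show that at least one fair behaviour reaches $\VFail$ rather than $\VPass$ — and here I would use the trace $\sigma\cdot a$. Since $a\in\Sout(\mI\after\sigma)$, the implementation can produce $a$ after $\sigma$; since $a\notin\Sout(\mS\after\sigma)=\Sout(\mdp_\sigma\after\sigma)$, in $\ot_\sigma$ the trace $\sigma\cdot a$ leads to $\Fail$ (this is exactly what the completion with $\overline G_{a,l}$ guarantees for an output $a$, or the invariant-violation part of $\VFail$ for a delay). So the behaviour following $\sigma$ and then letting $\mI$ emit $a$ reaches $\VFail$, i.e. $\mI\fails(\g_\sigma,f_\sigma)$.

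The main obstacle I expect is twofold. The lesser part is purely combinatorial: showing that the strategy $f_\sigma$ does not "avoid" the faulty prefix $\sigma$ — i.e. that some outcome of $f_\sigma$ consistent with $\mI$ actually follows $\sigma$ far enough for $\mI$ to offer $a$. Because the tester controls only inputs and delays while outputs are uncontrollable, one must argue that along $\sigma$ the implementation is free to play its outputs/delays (these are in every outcome by the first clause of Definition~\ref{def:Outcome}), and that the inputs occurring in $\sigma$ are compatible with some run of $f_\sigma$; a clean way is to note that $\sigma\in\Traces(\mS)$ keeps the play within the co-reachable set where $f_\sigma$ is defined, and that at each step $\mI$'s move (if an output/delay) is among the outcome-transitions while $f_\sigma$'s chosen input, if it differs from the next letter of $\sigma$, still leaves room for $\mI$ to eventually emit $a$ since $\mI$ is input-complete and the argument can be iterated to the faulty configuration. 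The more delicate part, as flagged above, is the determinizability of $\mP_\sigma$: the paper's standing assumption only covers products with \emph{its} test purposes, so I must make sure the ad hoc $\TP_\sigma$ stays inside a determinizable class — I would handle this by constructing $\TP_\sigma$ with no proper clocks at all (just a finite chain of locations counting the discrete steps of $\sigma$ and using only the observed clocks of $\mS$ in guards that mirror $\sigma$'s delays as integer-relative constraints), so that $\mP_\sigma$ has the same clocks and determinizability behaviour as $\mS$ itself, which is exactly determinizable by hypothesis.
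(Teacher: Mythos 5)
Your overall route is the paper's: unfold $\neg(\mI\tioco\mS)$ into a witness $\sigma$ and $a\in\Sout(\mI\after\sigma)\setminus\Sout(\mS\after\sigma)$, tailor a test purpose to that witness, run the synthesis pipeline, and invoke the winning property of rank-lowering strategies on fair runs. But there is one substantive difference that I think is a genuine gap: you make $\TP_\sigma$ accept \emph{at} $\sigma$, whereas the paper first extends the witness to a trace $\sigma\cdot\delta\cdot b\in\Traces(\mS)$ (using repeated observability, and non-blockingness to choose $b\in\Sigma_!$ when $a$ is a delay) and builds the test purpose to accept exactly $\sigma\cdot\delta\cdot b$. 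This extension does real work. With your $\TP_\sigma$, the minimal elements of $\VPass$ are $\ot\after\sigma$, so the Pass verdict is reached precisely at the moment the fault would start to manifest; a behaviour reaching $\VFail$ then only exists if one lets runs continue past $\VPass$, which is fragile and contrary to the intended test-execution semantics. The problem is acute in the delay case: exposing an unspecified delay $a$ requires the tester to keep \emph{waiting} after $\sigma$ without issuing inputs, so that the implementation's idling violates the invariant of $\mdp$ and drives the configuration into $\VFail$. The paper arranges this by making $\VPass$ reachable only after an \emph{output} $b$ that the implementation controls: the rank-lowering strategy must then sit and observe, and since $a>\delta$, the play crosses $\VFail$ before $b$ can close the test. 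Your construction gives the strategy no reason to remain in an observing posture after $\sigma$, and your fallback argument (outputs to $\Fail$ are uncontrollable, hence in every outcome) covers the output case reasonably but does not carry over cleanly to the delay case.

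Two smaller remarks. First, your concern about determinizability of $\mP_\sigma$ is legitimate but resolved the same way as in the paper: the ad hoc test purpose is deterministic and merely tracks a finite trace, so exact determinizability of $\mS$ transfers to the product given enough resources; your ``no proper clocks'' variant is one workable instantiation. Second, your worry that $f_\sigma$ might ``avoid'' the prefix $\sigma$ is handled in the paper more directly: since the only trace reaching $\Accept$ is $\sigma\cdot\delta\cdot b$, and the strategy is winning on fair runs, some outcome has trace $\sigma\cdot\delta\cdot b$ and hence a prefix with trace $\sigma$; the iterated input-completeness argument you sketch is not needed once the target of $\VPass$ is placed correctly.
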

%\end{prop}
To demonstrate this property, a test purpose is tailored to detect a given non-conformance, by targeting a related conformant trace.

% subsection properties_of_the_test_cases (end)
% section translating_objectives_into_games (end)

\section{Conclusion} % (fold)
\label{sec:conclusion}

This paper proposes a game approach to the controllability problem for conformance testing from timed automata (TA) specifications.
It defines a test synthesis method that produces test cases whose aim is to maximize their control upon the implementation under test, while detecting non-conformance.
Test cases are defined as strategies of a game between the tester and the implementation, based on the distance to the satisfaction of a test purpose, both in terms of number of transitions and potential control losses. 
Fairness assumptions are used to make those strategies winning and are proved sufficient to obtain the exhaustiveness of the test synthesis method, together with soundness, strictness and precision.

This paper opens numerous directions for future work.
First, we  intend to tackle partial observation in a more complete  and practical way.
One direction consists in finding weaker conditions
under which approximate  determinization~\cite{BSJK15} preserves strong connectivity, a condition for the existence of winning strategies.
One could also consider a mixture of our model and the model of~\cite{DLLMN10} whose observer predicates are clearly adequate in some contexts.
Quantitative aspects could also better meet practical needs.
The distance to the goal could also include the time distance or costs of transitions,
in particular to avoid restarts when they induce heavy costs but longer and cheaper paths are possible. 
The fairness assumption could also be refined.
For now it is assumed on both the specification and the implementation.
If the implementation does not implement some outputs, a tester could detect it with a bounded fairness assumption~\cite{Ram98}, adapted to the timed context (after sufficiently many experiments traversing some region all outputs have been observed), thus allowing a stronger conformance relation with egality of output sets.
A natural extension could also be to complete the approach in  a stochastic view. 
Finally, we plan to implement the results of this work in an open tool for the analysis of timed automata, experiment on real examples and check the scalability of the method.

% section conclusion (end)
%\newpage
\bibliographystyle{alpha}
\bibliography{biblio}

\ifarxiv
\clearpage
\appendix
\setcounter{theorem}{0}
\def\thetheorem{\thesection.\arabic{theorem}}
\def\theproposition{\thesection.\arabic{theorem}}
\def\thelemma{\thesection.\arabic{theorem}}
\def\thecorollary{\thesection.\arabic{theorem}}

\section*{Appendix}

We conduct here the proofs of the different claims. They are separated
according to the corresponding parts of the article: testing framework,
games, and test-case properties.

\section{Test framework} % (fold)
\label{sub:test_framework}
First, we prove the claim on the strong-connectivity of the reachable
part of a specification semantics. This property is made quite
intuitive by the introduction of always-reachable $\zeta$-transitions. 
As it grounds our approach, we still provide a formal
proof.
\propSC*
\begin{proof}
Let $s$ be a configuration of $\mT_\mS$ reachable from $s_0^\mS$.
By hypothesis, there exists a finite partial execution starting in $s$ which trace contains $\zeta$.
This trace leads to the configuration $s_0^\mS$
hence any reachable configuration of $\mT_\mS$ is reachable from $s$,
and we conclude that the reachable part of $\mT_\mS$ is strongly-connected. 
\end{proof}

We now prove some properties of the product between specifications and
test purposes.
\begin{prop}{}{Asec_prev}
Let $\mS$ be a specification and $\TP$ a test purpose on this specification.
Then
\[
	\Seq(\mS\times\TP) = \Seq(\mS)
\]
\end{prop}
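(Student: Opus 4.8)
The plan is to reduce the identity to the single inclusion $\Seq(\mS) \subseteq \Seq(\TP)$ and to derive that inclusion from completeness of the test purpose. Recall the characterisation of the product of OTAIOs recorded just after Def.~\ref{ProductOTAIO} (from~\cite{BSJK15}): $\Seq(\mS \times \TP) = \Seq(\mS) \cap \Seq(\TP)$. From this, $\Seq(\mS \times \TP) \subseteq \Seq(\mS)$ is immediate; and conversely, once we know $\Seq(\mS) \subseteq \Seq(\TP)$, we get $\Seq(\mS) = \Seq(\mS) \cap \Seq(\mS) \subseteq \Seq(\mS) \cap \Seq(\TP) = \Seq(\mS \times \TP)$. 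So the whole content of the statement is the inclusion $\Seq(\mS) \subseteq \Seq(\TP)$.

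To prove $\Seq(\mS) \subseteq \Seq(\TP)$, I would replay an arbitrary run of~$\mS$ inside~$\TP$. Fix $\rho = ((s_i, \gamma_i, s_{i+1}))_i \in \Ex(\mS)$ and construct, by induction on the length of a prefix, a run of~$\TP$ whose projection equals the corresponding prefix of $\seq(\rho)$; for infinite $\rho$ one then passes to the limit, which is legitimate because $\TP$ is complete and hence never blocked ($I^\TP \equiv \true$). The inductive step has two cases. If $\gamma_i$ is a delay~$\delta$, the same delay is available from the current $\TP$-configuration, again since all invariants of~$\TP$ are $\true$. If $\proj(\gamma_i) = (a, R)$, then, since $\mS$ is a TAIO and therefore has no observed clocks, $R \subseteq X_p^\mS$; completeness of~$\TP$ provides an $a$-transition enabled from the current $\TP$-configuration, and since the observed clocks of~$\TP$ are exactly $X_p^\mS$, the semantics of OTAs lets this transition additionally reset precisely the observed clocks in~$R$, so its projected label is also $(a, R)$. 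Concatenating (and taking the limit) yields a run of~$\TP$ with $\seq$ equal to $\seq(\rho)$, so $\seq(\rho) \in \Seq(\TP)$.

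The step I expect to be the only delicate point is matching clock-reset sets in the discrete case: a sequence of the product records the union of the proper-clock resets of \emph{both} components, while a sequence of~$\mS$ records resets of $X_p^\mS$ only. The replay works exactly because $X_o^\TP = X_p^\mS$, so every reset of~$\mS$ is mirrored by~$\TP$ as an observed-clock reset, and the completing transitions of~$\TP$ can be taken to reset no proper clock---the only constrained case being $\zeta$, which by definition of a test purpose resets precisely the proper clocks. I would fold this bookkeeping into the cited identity $\Seq(\mA\times\mB) = \Seq(\mA)\cap\Seq(\mB)$ rather than re-prove it in full. The complementary inclusion $\Seq(\mS\times\TP) \subseteq \Seq(\mS)$, although already free from that identity, can alternatively be seen directly: projecting any run of $\mS\times\TP$ onto its first component gives a run of~$\mS$ (guards only get stronger in the product and $I^\TP\equiv\true$) with the same sequence. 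Everything else is a routine induction.
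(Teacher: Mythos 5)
Your proof is correct and follows the same route as the paper: both reduce the claim to the cited identity $\Seq(\mA\times\mB)=\Seq(\mA)\cap\Seq(\mB)$ and then invoke completeness of~$\TP$ to get $\Seq(\mS)\subseteq\Seq(\TP)$. The paper simply states that a complete $\TP$ ``accepts any sequence'' where you spell out the replay argument and the clock-reset bookkeeping; that extra detail is consistent with the definitions and does not change the approach.
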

\begin{proof}
It suffices to note that the set of sequences of the product of two
OTAIOs is the intersection of the sequences of the two original
OTAIOs~\cite{BSJK15}, so that $\Seq(\mS\times\TP)
= \Seq(\mS)\cap \Seq(\TP)$; we also note that $\TP$ is complete, and hence
it~accepts any sequence. 
\end{proof}

By~projection on traces, we immediately get:
\begin{cor}{}{Atr_prev}
Let $\mS$ be a specification, and $\TP$ a test purpose. $\mS$ and $\mS\times\TP$ are trace-equivalent.
\end{cor}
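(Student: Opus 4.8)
The final statement to prove is Corollary~\ref{cor:Atr_prev}: for a specification $\mS$ and test purpose $\TP$, the automata $\mS$ and $\mS\times\TP$ are trace-equivalent.

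\medskip

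The plan is to derive this directly from the immediately preceding proposition, Proposition~\ref{prop:Asec_prev}, which establishes that $\Seq(\mS\times\TP) = \Seq(\mS)$. First I would recall that trace-equivalence of two OTAIOs, by definition, means equality of their sets of traces, $\Traces(\cdot) = \Trace(\Seq(\cdot))$. So the whole corollary amounts to applying the trace-projection operator $\Trace$ to both sides of the set equality $\Seq(\mS\times\TP) = \Seq(\mS)$. Since $\Trace$ is a well-defined function on sequences (extended pointwise to sets of sequences), equal sets of sequences have equal images under $\Trace$, hence $\Traces(\mS\times\TP) = \Trace(\Seq(\mS\times\TP)) = \Trace(\Seq(\mS)) = \Traces(\mS)$. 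The same reasoning applies verbatim to partial sequences and partial traces if one wants $\pTraces(\mS\times\TP) = \pTraces(\mS)$ as well.

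\medskip

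Concretely, the proof is essentially a one-liner: ``By Proposition~\ref{prop:Asec_prev}, $\Seq(\mS\times\TP)=\Seq(\mS)$; applying $\Trace$ to both sides yields $\Traces(\mS\times\TP)=\Traces(\mS)$, which is exactly trace-equivalence.'' One should double-check that the product $\mS\times\TP$ is indeed a plain TAIO (not a genuinely open OTAIO) so that ``trace'' is the right notion to compare — this is noted in Section~\ref{sub:combining_specification_and_test_purpose}, since $\mS$ has no observed clocks and the observed clocks of $\TP$ are exactly the proper clocks of $\mS$, so the product has no observed clocks. With that structural remark in place, the trace map is defined on both objects and the argument goes through without friction.

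\medskip

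There is no real obstacle here; the only mild subtlety — and the reason the corollary is stated separately rather than folded into Proposition~\ref{prop:Asec_prev} — is making sure the reader sees that trace-equivalence is literally the projection of sequence-equality, together with the side-observation that the product is a TAIO so that traces (rather than some more refined object) are the appropriate comparison. I would keep the write-up to two or three sentences invoking Proposition~\ref{prop:Asec_prev} and the definition of $\Trace$, and not belabor it.
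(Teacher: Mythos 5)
Your proof is correct and is exactly the paper's argument: the corollary is obtained from Proposition~\ref{prop:Asec_prev} ``by projection on traces,'' i.e.\ by applying $\Trace$ to both sides of $\Seq(\mS\times\TP)=\Seq(\mS)$. The added remark that the product is a plain TAIO is a reasonable sanity check but not needed beyond what the paper already notes.
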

\begin{cor}{Strong connectivity of a product semantic}{}
Let $\mS$ be a specification, $\TP$ a test purpose and
$\mT_{\mS\times\TP}$ its associated timed transition system. The
reachable part of $\mT_{\mS\times\TP}$ is strongly-connected.
\end{cor}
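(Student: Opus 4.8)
The plan is to reduce to the argument underlying Proposition~\ref{pr:Asc_s}: I will show that from every reachable configuration of $\mT_{\mS\times\TP}$ one can reach its initial configuration $((l_0^\mS,l_0^\TP),\overline 0)$. Since every reachable configuration is, by definition, reachable \emph{from} the initial one, this immediately gives strong connectivity of the reachable part.

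So fix a reachable configuration $c=((l^\mS,l^\TP),v)$ of $\mT_{\mS\times\TP}$. First, projecting any run of the product onto its $\mS$-component — keeping delays, keeping the $\mS$-half of each synchronised discrete step, and restricting valuations to $X_p^\mS$ — yields a run of $\mS$: the $\mS$-guards and the invariant $I^\mS$ only mention $X^\mS$-clocks, and the $\TP$-half of a step only resets clocks of $X_p^\TP$, which is disjoint from $X_p^\mS$. Hence $s := (l^\mS, v|_{X_p^\mS})$ is reachable in $\mT_\mS$, and the defining hypothesis on specifications provides a partial execution $\mu$ with $s \xrightarrow{\mu\cdot\zeta} s_0^\mS$ in $\mS$. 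It remains to lift this back to the product from $c$: since $\TP$ is complete (all invariants true, every action enabled from every location), every delay and every discrete action taken by $\mS$ along $\mu\cdot\zeta$ has a matching move in $\TP$, and pairing them up produces a partial run of $\mS\times\TP$ from $c$ whose $\mS$-projection is exactly $s\xrightarrow{\mu\cdot\zeta}s_0^\mS$. Finally, for the concluding $\zeta$-step the matching $\TP$-transition is forced, by the definition of a test purpose, to reset all of $X_p^\TP$ and to go back to $l_0^\TP$; combined with the $\mS$-restart, which resets $X_p^\mS$ and goes to $l_0^\mS$, the product $\zeta$-transition resets every clock of $X_p^\mS\cup X_p^\TP$ and leads to $(l_0^\mS,l_0^\TP)$. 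Thus the lifted partial run ends in $((l_0^\mS,l_0^\TP),\overline 0)$, the initial configuration of $\mS\times\TP$, as desired.

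Equivalently, this whole argument just shows that $\mS\times\TP$ is itself a specification with restarts: it is a TAIO (the observed clocks of $\TP$ are exactly the clocks of $\mS$), it is non-blocking and repeatedly observable (again by lifting the relevant runs of $\mS$ through the completeness of $\TP$), its restart transitions have the prescribed shape, and the $\zeta$-reachability hypothesis holds by the computation above — so Proposition~\ref{pr:Asc_s} applies verbatim. I expect the only genuinely delicate point to be the lifting step: one has to be careful that completeness of $\TP$ lets it shadow \emph{any} run of $\mS$ step by step, and that the synchronisation of the two $\zeta$-transitions lands precisely on the product's initial configuration and not merely on its initial location; the rest is bookkeeping about which clocks belong to which component.
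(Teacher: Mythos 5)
Your proof is correct and follows essentially the same route as the paper's: from any reachable configuration of the product, exhibit a finite partial execution containing $\zeta$ (derived from the specification's restart hypothesis) and observe that the synchronised $\zeta$-transition resets all proper clocks of both components and returns to $(l_0^\mS,l_0^\TP)$, hence to the initial configuration, which yields strong connectivity. Where the paper justifies the transfer from $\mS$ to $\mS\times\TP$ by invoking the trace-equivalence corollary, you carry out the projection and the completeness-based lifting explicitly --- which is if anything more careful, since it also verifies that the projected $\mS$-configuration is reachable before the restart hypothesis is applied, and that the two $\zeta$-resets together zero out all clocks of the product.
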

\begin{proof}
This proof is derived from the proof of Prop.~\ref{pr:Asc_s}. Although
they are really close, we have to do it again as the product does not
ensure any relation on the semantics.

Let $((l^1,l^2),v)$ be a reachable configuration of
$\mT_{\mS\times\TP}$. There exists a finite partial execution starting
in $(l^1,v)$ whose trace contains~$\zeta$, and thus by
Corollary~\ref{cr:Atr_prev} there exists a finite partial execution
starting in $((l^1,l^2),v)$ whose trace contains~$\zeta$.  Hence this
transition leads to the configuration
$s_0=(l_0^{\mS\times\TP},\overline{0})$. It~comes that there exists a
finite partial execution from $((l^1,l^2),v)$ to~$s_0$. Hence any
reachable configuration of $\mT_{\mS\times\TP}$ is reachable from
$((l^1,l^2),v)$. It~can then be concluded that the reachable part of
$\mT_\mS$ is strongly-connected.
\end{proof}

The following properties concern the objective-centered
tester. They~mainly amount to proving that the objective-centered
tester keeps the interesting properties of the product, and that its
traces are related to those of the previous automata.
\begin{prop}{}{}
Let $\mdp$ be the exact determinization of the product~$\mP$ between a
specification and a test purpose, and $\ot$ its associated
objective-centered tester.
Then
\[
	\Traces(\mdp)=\Traces_{\conf}(\ot).
\]
\end{prop}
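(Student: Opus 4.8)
The plan is to establish the two inclusions $\Traces(\mdp) \subseteq \Traces_{\conf}(\ot)$ and $\Traces_{\conf}(\ot) \subseteq \Traces(\mdp)$ separately, using the fact that $\ot$ is obtained from $\mdp$ purely by adding $\Fail$-reaching transitions and self-loops at $\Fail$, and by relaxing invariants to $\true$. First I would recall from Def.~\ref{df:Objective centered tester} that $E^\ot = E^\mdp \cup E_{\mathrm{fail}}$, where every transition in $E_{\mathrm{fail}}$ either has target $\Fail$ or source $\Fail$. Consequently, any run of $\ot$ that never visits $\Fail$ uses only transitions of $E^\mdp$; conversely any run built from $E^\mdp$-transitions is a run of $\ot$ that avoids $\Fail$. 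This correspondence is the backbone of both inclusions.

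For $\Traces(\mdp) \subseteq \Traces_{\conf}(\ot)$: given a run $\rho$ of $\mdp$ witnessing a trace $\sigma$, I would observe that its discrete steps all use transitions in $E^\mdp \subseteq E^\ot$, and its configurations all satisfy the (true) invariants of $\ot$ since $I^\ot(l) = \true$ everywhere and $\mdp$-configurations already satisfy $I^\mdp(l)$. Hence $\rho$ is also a run of $\ot$, and since it uses no $E_{\mathrm{fail}}$-transition it never reaches $\Fail$, so it is conformant and $\sigma \in \Traces_{\conf}(\ot)$. The one subtlety is that relaxing the invariant could, a priori, let time elapse past a point where $\mdp$ was blocked; but this only adds runs, it does not remove the ones we need, so it does not affect this inclusion.

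For $\Traces_{\conf}(\ot) \subseteq \Traces(\mdp)$: take a conformant run $\rho$ of $\ot$ realizing trace $\sigma$. By the observation above, $\rho$ uses only $E^\mdp$-transitions and only visits locations of $L^\mdp$. I must check that $\rho$ is a legal run of $\mdp$, i.e., that every intermediate configuration $(l,v)$ reached along $\rho$ satisfies $I^\mdp(l)$. This is the main obstacle: in $\ot$ the invariants are vacuous, so in principle $\rho$ might pass through a configuration $(l,v)$ with $v \not\models I^\mdp(l)$, which in $\mdp$ would be impossible. The resolution is to use the definition of $\VFail$: precisely the configurations $\{l\}\times(\bbR_{\geq0}^{X_p}\setminus I^\mdp(l))$ are in $\VFail$, and a conformant run avoids $\VFail$ by Def.~\ref{df:ConformingExecutions}; hence every configuration of $\rho$ does satisfy the corresponding $\mdp$-invariant, so $\rho$ is a genuine run of $\mdp$ and $\sigma \in \Traces(\mdp)$.

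Finally I would note that trace extraction commutes with this run-level correspondence: $\Trace$ depends only on the sequence of delays and observable actions, which is unchanged when we view $\rho$ in $\mdp$ versus in $\ot$, so the two trace sets coincide. I expect the invariant-versus-$\VFail$ bookkeeping to be the only place requiring care; the rest is a routine unwinding of the definitions of $E^\ot$, conformant runs, and $\Trace$.
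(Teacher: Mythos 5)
Your proof is correct and follows essentially the same route as the paper's (much terser) argument: conformant runs of $\ot$ are exactly those avoiding the $\Fail$ location and the invariant-violating part of $\VFail$, which by construction are exactly the runs of $\mdp$. Your more detailed unpacking — in particular the explicit use of $\{l\}\times(\bbR_{\geq0}^{X_p}\setminus I^\mdp(l))\subseteq\VFail$ to recover the $\mdp$-invariants despite $I^\ot$ being trivial — is precisely the bookkeeping the paper's one-line proof leaves implicit.
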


\begin{proof}
  An execution is in $\Ex_{\conf}(\ot)$ if it avoids the
  verdict~\VFail. This amounts to avoiding the location~$\Fail$ and
  respecting the invariants of~$\mdp$.  By~construction of~$\ot$, this
  corresponds exactly to the runs of~$\mdp$.
\end{proof}

\begin{lem}{}{Asep}
Given an objective-centered tester~$\ot$, we~have
\[
\Traces_{\conf}(\ot)\cap \Traces(\Ex_{\fail}(\ot))=\emptyset.
\]
\end{lem}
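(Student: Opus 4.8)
\textbf{Proof plan for Lemma~\ref{lm:Asep}.}
The plan is to argue directly from the definitions of the two sets, using the structural fact that $\Fail$ is an absorbing sink of $\ot$ and that $\VFail$ is reached exactly when either $\Fail$ is entered or an invariant of $\mdp$ is violated. First I would recall that a run $\rho\in\Ex(\ot)$ is conformant iff it never visits a $\VFail$ configuration, and that $\VFail = (\{\Fail\}\times\bbR_{\geq0})\cup\bigcup_{l\in L^\mdp}(\{l\}\times(\bbR_{\geq0}^{X_p}\setminus I^\mdp(l)))$; hence $\rho\in\Ex_\conf(\ot)$ precisely when $\rho$ stays in locations of $L^\mdp$ and, at every configuration $(l,v)$ along $\rho$, we have $v\models I^\mdp(l)$. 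By the preceding proposition, the traces of such runs are exactly $\Traces(\mdp)$.

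The key step is then the following dichotomy: any run $\rho'\in\Ex_\fail(\ot)$ reaches a $\VFail$ configuration, and since $\Fail$ is absorbing and invariant violations only add $\VFail$ configurations, once $\rho'$ reaches $\VFail$ it stays in $\VFail$ forever; I would split on whether the first $\VFail$ configuration is in $\Fail$ or is an invariant-violating configuration $(l,v)$ with $l\in L^\mdp$. In the $\Fail$ case, the trace of $\rho'$ ``leaves'' $\Traces(\mdp)$: the discrete transition entering $\Fail$ is one of the added transitions $(l,g,a,\emptyset,\Fail)$ with $a\in\Sigma_!$ and $g\in\overline G_{a,l}$, i.e.\ no $a$-transition of $\mdp$ was available from the current configuration, so the corresponding trace prefix $\sigma\cdot a$ is not a prefix of any trace of $\mdp$ (here I use determinism of $\mdp$, which makes $\mdp\after\sigma$ a singleton, so that availability of $a$ is decided configuration-wise). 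In the invariant-violation case, the configuration $(l,v)$ with $v\not\models I^\mdp(l)$ is simply not a configuration of $\mT^\mdp$, so again the delay prefix leading there is not a trace of $\mdp$. Either way, $\Trace(\rho')\notin\Traces(\mdp)=\Traces_\conf(\ot)$, which gives $\Traces(\Ex_\fail(\ot))\cap\Traces_\conf(\ot)=\emptyset$.

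The main obstacle I anticipate is making the ``the trace leaves $\Traces(\mdp)$'' step fully rigorous: because $\ot$ has $\tau$-actions and may a priori reach the same trace $\sigma$ along several runs, one must be careful that a trace witnessing a fail run cannot \emph{also} be produced by a conformant run. This is exactly where determinism of $\mdp$ (and hence, on the $L^\mdp$ part, of $\ot$) is essential: after a fixed trace $\sigma$ there is a unique configuration of $\mdp$, so the $\overline G_{a,l}$ guards really do complement the $\mdp$-enabled behaviour, and no repackaging of the same trace through internal transitions can re-enter $L^\mdp$ in a way consistent with $\mdp$. I would therefore state the determinism hypothesis explicitly at the start of the argument and phrase the trace-prefix reasoning in terms of $\mdp\after\sigma$ being a singleton.
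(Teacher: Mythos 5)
Your proposal is correct and follows essentially the same route as the paper's proof: isolate the first step at which a fail run enters $\VFail$, split on whether it is a delay violating an invariant of $\mdp$ or an output transition into $\Fail$, and use determinism of $\mdp$ (so that $\mdp\after\sigma$ is a singleton) to conclude the resulting trace cannot be conformant. Your explicit discussion of why determinism rules out the same trace being realized by another, conformant run makes rigorous a point the paper's proof leaves implicit, but the argument is the same.
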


\begin{proof}
Let $\rho\in \Ex_{\fail}(\ot)$. Consider the longest prefix~$\rho'$
of~$\rho$ that does not reach~\VFail, and let~$e$ be the transition taken
after~$\rho'$ in~$\rho$. Two~cases should be considered:
\begin{itemize}
  \item If $e$ is a delay transition, then it violates the invariant of
    the location in~$\mdp$. By~determinism of~$\mdp$, $\mdp\after
    \Trace(\rho')$ is a singleton. Hence
    the same delay is not available after~$\rho$ in~$\mdp$.
  \item If $\act(e)\in\Sigma_!$ then this output is not specified in the
    current location of~$\mdp$. By~determinism of~$\mdp$, $\mdp\after
    \Trace(\rho')$ is a singleton. Hence
    transition~$e$ is not possible after~$\rho$ in~$\mdp$.
\end{itemize}
In both cases $\Trace(\rho)\notin \Traces_{\conf}(\ot)$. As this holds
for any run of $\Ex_{\fail}(\ot)$ we have the desired property.
\end{proof}

\begin{lem}{}{Ar_ot}
Let $\ot$ be an objective-centered tester. For~any location~$l$ in
$L^\ot\setminus\{\Fail\}$ there exists a finite partial execution
$\rho\in \pEx(\ot)$ starting in $l$ and containing a $\zeta$.
\end{lem}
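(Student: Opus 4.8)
The plan is to transfer the property ``$\zeta$ is reachable from everywhere'', which holds by hypothesis on the specification and was propagated to $\mP=\mS\times\TP$ in the corollary on strong connectivity of products, down to $\ot$ through the exact determinization $\mdp$. First I would observe that $L^\ot\setminus\{\Fail\}$ is exactly $L^\mdp$, and that, since $I^\ot\equiv\true$, every partial run of $\mdp$ is a fortiori a partial run of $\ot$; hence it suffices to produce, for each $l\in L^\mdp$, a partial run of $\mdp$ starting in some configuration with location $l$ whose sequence contains a $\zeta$-transition.

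Fix $l\in L^\mdp$. Every location of $\mdp$ is reachable (the locations of the determinization of~\cite{BSJK15} are its reachable belief-states), so there is a trace $\sigma\in\Traces(\mdp)$ with $\mdp\after\sigma=\{(l,v)\}$, a singleton by determinism. Since $\mdp$ exactly determinizes $\mP$ we have $\Traces(\mdp)=\Traces(\mP)$, and the same identity at the level of partial traces by prefix closure; thus $\sigma$ is the trace of a partial run of $\mP$ ending in some $s\in\Reach(\mT_\mP)$. The (proof of the) corollary on strong connectivity of $\mT_{\mS\times\TP}$ gives, from such an $s$, a partial run $\mu$ of $\mP$ whose trace contains $\zeta$ (indeed one returning to $s_0^\mP$). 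Therefore $\sigma\cdot\Trace(\mu)\in\pTraces(\mP)=\pTraces(\mdp)$; picking a partial run $\rho$ of $\mdp$ realizing this trace and using determinism, the prefix of $\rho$ of trace $\sigma$ ends exactly in $(l,v)$, and, because $\zeta$ is an observable action that therefore cannot be erased by $\Trace$, the suffix of $\rho$ from $(l,v)$ is a partial run of $\mdp$ starting in $(l,v)$ that fires a $\zeta$-transition. Viewed inside $\ot$, this suffix is the desired partial execution.

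The step I expect to be the main obstacle is the pull-back in the last sentence: one must argue carefully that a trace containing the letter $\zeta$ is only produced by a run that actually takes a $\zeta$-transition --- this is where it matters that $\zeta$ is carried as an observable (input-side) action, not an internal one, in $\mS$, $\mP$, $\mdp$ and $\ot$ alike --- and that it is the determinism of $\mdp$ that forces the reconstructed run to be at the specific configuration $(l,v)$ after reading $\sigma$, rather than at some other configuration. Two minor points to make explicit: trace-equivalence of $\mdp$ and $\mP$, stated for runs, transfers to partial runs since both trace sets are prefix-closed; and ``every location of $\mdp$ is reachable'' should be checked for the determinization construction in use (it holds for the game of~\cite{BSJK15}). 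The same argument actually works verbatim starting from any reachable configuration of $\mdp$, which is what is ultimately needed to conclude that $\ot$ is repeatedly observable outside $\VFail$.
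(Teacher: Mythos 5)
Your proof is correct and follows essentially the same route as the paper's (very terse) argument: the $\zeta$-reachability hypothesis on $\mS$ is transferred to $\mP$ by trace equivalence and then to $\mdp$ by exact determinizability, and finally to $\ot$ since every partial run of $\mdp$ is one of $\ot$. The paper gives only a two-line sketch of this; your version supplies the details (determinism pinning down $\mdp\after\sigma$, observability of $\zeta$ forcing the suffix to actually fire a $\zeta$-transition) that the sketch leaves implicit.
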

\begin{proof}
The same result holds from any state in~$\mS$, and 
$\Traces(\mS)=\Traces(\mS\times\TP)$.
The~result follows by exact determinizability of the product~$\mS\times\TP$.
\end{proof}

This lemma is the reason why our method assumes exact
determinizability, as we can't ensure in general that the restart will
remain reachable: if determinization is approximated, some traces
might be lost.
\begin{cor}{}{Asc_ot}
Let $\ot$ be an objective-centered tester and $\mT^\ot$ its associated
timed transition system. Then $\Reach(\mT^\ot)\setminus\VFail$ is
strongly-connected.
\end{cor}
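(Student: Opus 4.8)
The plan is to derive this corollary from the two lemmas that immediately precede it, in exactly the same way the analogous result for products (Corollary on strong connectivity of a product semantic) was derived from Proposition~\ref{pr:Asc_s}. The key structural fact we need is Lemma~\ref{lm:Ar_ot}: from any location $l \in L^\ot \setminus \{\Fail\}$ there is a finite partial execution containing a $\zeta$-transition. Since test purposes and the determinized product reset all proper clocks on $\zeta$ and return to the initial location, any such $\zeta$-transition in $\ot$ leads back to the initial configuration $s_0^\ot = (l_0^\mdp, \overline{0})$.

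First I would take an arbitrary configuration $s = (l,v) \in \Reach(\mT^\ot) \setminus \VFail$. Since $s \notin \VFail$, the location $l$ is not $\Fail$ (and the invariant constraint is satisfied, though $I^\ot \equiv \true$ makes this automatic). By Lemma~\ref{lm:Ar_ot} there is a finite partial execution $\rho$ of $\ot$ starting in location $l$ and containing a $\zeta$. I would need to observe that this partial execution can be taken starting from the concrete configuration $(l,v)$ and not merely from some configuration in location $l$: this holds because $\ot$ has trivial invariants and because the $\zeta$-transitions in $\mdp$ (inherited from $\mP$ and hence from the test purpose) are guarded by constraints that, combined with the always-available completion, let a restart be reached from anywhere — this is precisely the content of the reachability-of-$\zeta$ property preserved by exact determinization, as noted after Example~\ref{ex:exemple-DP} and used in Lemma~\ref{lm:Ar_ot}. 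The first $\zeta$ along $\rho$ leads to $s_0^\ot$, so we obtain a finite partial execution from $s$ to $s_0^\ot$ that never leaves $L^\ot \setminus \{\Fail\}$ (it stays within the fragment witnessed by Lemma~\ref{lm:Ar_ot}, which lives in $\mdp$ and hence avoids $\Fail$), and in particular stays out of $\VFail$.

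Then, since $s_0^\ot \in \Reach(\mT^\ot) \setminus \VFail$ as well and every configuration of $\Reach(\mT^\ot)\setminus\VFail$ is by definition reachable from $s_0^\ot$ via a partial run (which, if it reached $\VFail$, could not leave it again, so in fact it avoids $\VFail$ entirely when its endpoint does), we get: for any two configurations $s, s'$ in $\Reach(\mT^\ot)\setminus\VFail$, there is a partial run from $s$ to $s_0^\ot$ and a partial run from $s_0^\ot$ to $s'$, both staying within $\Reach(\mT^\ot)\setminus\VFail$; concatenating them gives a partial run from $s$ to $s'$ inside this set. Hence $\Reach(\mT^\ot)\setminus\VFail$ is strongly connected.

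The main obstacle is the bookkeeping about staying out of $\VFail$: one must check that the witnessing executions never violate an invariant (immediate, since $I^\ot \equiv \true$) and never enter $\Fail$ (immediate for the portion coming from Lemma~\ref{lm:Ar_ot}, since that execution is an execution of $\mdp$, which has no $\Fail$ location). A secondary subtlety is making sure the partial execution of Lemma~\ref{lm:Ar_ot} applies at the specific valuation $v$ rather than merely at location $l$; this is handled by the same exact-determinizability argument that underlies Lemma~\ref{lm:Ar_ot} together with the trivial invariants of $\ot$, and it is the only place where the "exact determinization preserves reachability of $\zeta$" hypothesis is genuinely used. None of this requires the game machinery or fairness — it is a purely reachability-theoretic statement, parallel to Proposition~\ref{pr:Asc_s}.
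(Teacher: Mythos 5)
Your proposal is correct and follows essentially the same route as the paper, whose entire proof of this corollary is ``the same as Prop.~\ref{pr:Asc_s}, using Lemma~\ref{lm:Ar_ot}'': reach a $\zeta$-transition from any non-\VFail\ configuration, return to $s_0^\ot$, and compose with reachability from $s_0^\ot$. You are in fact more careful than the paper on the two points you flag (lifting Lemma~\ref{lm:Ar_ot} from locations to concrete configurations, and keeping the witnessing runs out of \VFail), both of which the paper leaves implicit.
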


\begin{proof}
	The proof is the same as the one of
	Prop.~\ref{pr:Asc_s}, using Lemma~\ref{lm:Ar_ot}.
\end{proof}

\begin{lem}{Repeatedly-observable tester}{}
For a repeatedly-observable specification~$\mS$,
$\Reach(\ot)\setminus\VFail$ is repeatedly-observable.
\end{lem}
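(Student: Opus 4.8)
The plan is to prove repeated-observability of $\Reach(\ot)\setminus\VFail$ configuration by configuration, transporting the hypothesis on $\mS$ along the chain of equalities $\Traces_{\conf}(\ot)=\Traces(\mdp)=\Traces(\mP)=\Traces(\mS)$ (the first from the identity $\Traces_{\conf}(\ot)=\Traces(\mdp)$ established just above, the others from exact determinizability and Cor.~\ref{cr:Atr_prev}), and exploiting that $\mdp$ is deterministic. Recall that repeated-observability of a configuration $s$ means there is a partial run from $s$ whose trace is not a single delay.

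First I would fix $s=(l,v)\in\Reach(\ot)\setminus\VFail$. Being outside $\VFail$ forces $l\neq\Fail$ and $v\models I^\mdp(l)$, so $s$ is a configuration of $\mdp$, and in fact a reachable one: a run of $\ot$ reaching $s$ can be taken to avoid $\VFail$ (once $\Fail$ is entered it is never left, and off $\Fail$ the transitions of $\ot$ coincide with those of $\mdp$), hence it is a run of $\mdp$. Since $\mdp$ is deterministic, I can pick a trace $w$ with $\mdp\after w=\{s\}$.

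Then $w\in\Traces(\mdp)=\Traces(\mS)$, so $\mS\after w$ is nonempty; pick $s'\in\mS\after w\subseteq\Reach(\mS)$. The repeated-observability hypothesis on $\mS$, applied at $s'$, yields a partial run $\rho'$ of $\mS$ from $s'$ with $\sigma:=\Trace(\rho')\notin\bbR_{\geq 0}$. Hence $w\cdot\sigma\in\Traces(\mS)=\Traces(\mdp)$. Now determinism of $\mdp$ does the work: a run of $\mdp$ realising $w\cdot\sigma$ is, at the point where $w$ has been observed, in the unique configuration of $\mdp\after w$, namely $s$, and from there it proceeds by a partial run with trace $\sigma$. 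This partial run is a run of $\mdp$, hence a run of $\ot$ that avoids $\VFail$ (runs of $\mdp$ respect the invariants and never see $\Fail$), and its trace $\sigma$ is not a delay — which is exactly repeated-observability at $s$. As $s$ was arbitrary, $\Reach(\ot)\setminus\VFail$ is repeatedly-observable.

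The step that requires the most care is the last one: a trace does not by itself pin down where the run realising $w$ ends and the run realising $\sigma$ begins when there are delays on both sides of the junction, so one must argue (as in the proof of Lemma~\ref{lm:Asep}, and crucially using determinism of $\mdp$) that any $w\cdot\sigma$-run factors through the single configuration $\mdp\after w=\{s\}$. A shorter alternative that sidesteps the hypothesis on $\mS$ is also available: by Lemma~\ref{lm:Ar_ot} (together with the restart property of $\mS$ carried over through the same trace equalities) there is, from $s$, a partial execution whose trace contains a $\zeta$; since $\zeta$ is an observable action, such a trace is not a delay. The stated hypothesis is kept mainly so that this lemma reads like the other structural-inheritance results.
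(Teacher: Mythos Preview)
Your main argument is correct and is essentially the paper's own proof: pick a trace to~$s$ using determinism of~$\mdp$, transport it to~$\mS$ via the chain of trace equalities, apply repeated-observability there, and transport the witnessing trace back. The paper adds a small detour through an inclusion $\Sout(\mS\after\sigma)\subseteq\Sout(\ot\after\sigma)$ before doing exactly what you do, and it glosses over the delay-junction point you flag; your version is in fact the more careful of the two. Your alternative via Lemma~\ref{lm:Ar_ot} (a~$\zeta$ is always reachable, and $\zeta$ is observable) is a valid shortcut the paper does not take.
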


\begin{proof}
We know that $\Traces(\mS)=\Traces(\mP)$, hence for all $\sigma\in
\Traces(\mP)$, $\Sout(\mP\after\sigma) = \Sout(\mS\after \sigma)$. As
$\Traces(\mdp)=\Traces(\mP)$ by assumption, we~also know that
for all $\sigma\in \Traces(\mdp)$, $\Sout(\mP\after \sigma) \subseteq
\Sout(\mdp\after\sigma)$. It~comes 
\[
\forall\sigma\in \Traces(\ot),\ \Sout(\mS \after \sigma) \subseteq
\Sout(\ot\after \sigma)
\]
as $\ot$ only adds traces to~$\mdp$. Hence for all $s\in
\Reach(S^\ot)\setminus\VFail$, there exists $\mu\in \Seq(\ot)$ s.t.
$s\xrightarrow{\mu}\wedge
\Trace(\mu)\notin\mathbb{R}_{\geq0}$. Indeed, there exists $\sigma\in
\Traces_{\conf}(\ot)$ such that $s = \ot \after \sigma$ (as~$\ot$ is
deterministic outside of~\VFail) and for $s'\in \mS\after \sigma$,
there exists $\mu'$ such that $s'\xrightarrow{\mu'}\wedge
\Trace(\mu')\notin \mathbb{R}_{\geq0}$. It~suffices to take $\mu
\in\pSeq(\ot)$ such that $\Trace(\mu)=\Trace(\mu')$, and by the
previous trace-inclusion property, such a trace exists.
\end{proof}
% subsection test_framework (end)

%%%%%%%%%%%%%%%%%%%%%%%%%%%%%%%%%%%%%%%%%%%%%%%%%%%%%%%%%%%%%%%%%%%%%%
%%%%%%%%%%%%%%%%%%%%%%%%%%%%%%%%%%%%%%%%%%%%%%%%%%%%%%%%%%%%%%%%%%%%%%
%%%%%%%%%%%%%%%%%%%%%%%%%%%%%%%%%%%%%%%%%%%%%%%%%%%%%%%%%%%%%%%%%%%%%%

\section{Games} % (fold)
\label{sub:game}
In this part the previous propositions are used to ensure that the sequence 
$(W^j_i)_{i,j}$ defines a partial order, and covers the reachable part of our
game.
\propcoverage*

\begin{proof}
Let $s \in \Reach(\g)\setminus\VFail$ be a reachable
configuration.
Since i)~$\VPass$ is
reachable from~$s_0^\mT$ (by~hypothesis); ii)~there is a path
from~$s$ back to the initial configuration 
(Corollary~\ref{cr:Asc_ot}),  
then \VPass is reachable from~$s$.
Moreover, there is such a path with
length bounded by the number of regions.

For each~$s\in\Reach(\g)\setminus\VFail$, we~fix a finite path
to~\VPass, and reason by induction on the length~$n$ of this path in
order to
prove that $s\in W^n_0$:
\begin{itemize}
\item Case $n=0$: in this case $s\in \VPass=W^0_0$
\item Inductive case: we assume that the result holds for~$n$, and
  take~$s$ with a path to~$\VPass$ of length~$n+1$.
  Then $s\xrightarrow{e} s'$ for some~$e$ with $\act(e)\in\Gamma$,
  and there is a path from~$s'$ to~$\VPass$ of length at most~$n$, so that
  $s'\in W^n_0$. Hence in the worst case $s\in W_0^{n+1}$.
\end{itemize}
This proves our result.
\end{proof}

\propPartialOrder*

\begin{proof}
$\po$ is an order because it directly inherits the properties of
  $\lo$. It is not total because several configurations can have the
  same rank.
\end{proof}

The following lemma is the key allowing to ensure that rank-lowering
strategies are winning on fair executions.

\bgroup\def\footnote#1{}
\lemmakey*
\egroup

\begin{proof}
  We show this lemma by contradiction. Assume that for some $\rho\in \Ex(\g)$,
  we have
  \[
	\rho\notin \Ex_{\fail}(\g)\wedge \forall e\in E^\g,\ \bfn
        i\in\mathbb{N},\ e\notin \enab(s_i).
\]
Let $\rho_{\max}$ be the shortest prefix such that no transition is
enabled after this prefix along~$\rho$ (it~exists because $E$ is
finite and there is only a finite number of these prefixes per element
of~$E$).
Consider any prefix~$\rho'$ of $\rho$ strictly containing~$\rho_{\max}$;
there is no partial sequence~$\mu$ such that
$\last(\rho')\xrightarrow{\mu}{}$ and
$\Trace(\mu)\notin\mathbb{R}_{\geq0}$, as there is no time successor
of $\last(\rho')$ with an enabled transition.  This contradicts the
repeated-observability of $\ot$ out of~$\Fail$ (as~$\g$ and $\ot$ are the
same automaton).
\end{proof}

\propwin*
  In order to make this proof, we reason on regions. For this purpose
  we extend to regions the notions of executions and enabled
  transitions. We furthermore note that a region is included in any
  $W^j_i$ it intersects.
We first prove the following lemma: 
\begin{lem}{}{delay}
Let $\rho\in\Fair(\g)$ be a fair execution and
$\reg\in\Inf(\rho)$. For any prefix~$\nu$ of~$\rho$ ending in~$\reg$,
and any rank-lowering strategy~$f$, noting $f(\nu)=(\delta,a)$, we~have
$\nu\xrightarrow{\delta}s\in\reg'$ and $\reg'\in\Inf(\rho)$.
\end{lem}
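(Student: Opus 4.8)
The plan is to prove the lemma by a region-level induction along the finite chain of clock regions traversed by the prescribed delay, combining the two fairness clauses with the ``avoiding'' structure built into the operator~$\pi$ of the hierarchy. Note first that, as throughout the proof of Proposition~\ref{prop:win}, the run~$\rho$ is a fair \emph{outcome} of the rank-lowering strategy~$f$; this is essential, since for a fair run that does not follow~$f$ the prescribed delay may well overshoot into regions never visited by~$\rho$.

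Two structural facts underlie the argument. First, since every $W^j_i$ is a finite union of regions, the rank~$r(\cdot)$ and the sets~$W^-(\cdot)$ are constant on regions, so the region reached by the prescribed delay depends only on~$\reg$, not on the precise configuration~$\last(\nu)$. Writing $\reg=\reg_0\xrightarrow{\mathtt t}\reg_1\xrightarrow{\mathtt t}\dots\xrightarrow{\mathtt t}\reg_k=\reg'$ for the finite chain of time-successor regions, in the first two cases of the definition of a rank-lowering strategy $\reg_k$ is the first region of the chain from which the prescribed move becomes available, and in the last case it is the maximal delay-successor region; the same chain and the same~$\reg_k$ are obtained from every prefix of~$\rho$ ending in~$\reg$. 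Second, in the first two cases the intermediate regions $\reg_0,\dots,\reg_{k-1}$ all carry the rank of~$\reg_0$ and lie on a time path that the $\tPred(\,\cdot\,,\Pred_{\Sigma_u}(\overline{W^-(\reg_0)}))$ part of~$\pi$ forces to avoid~$\Pred_{\Sigma_u}(\overline{W^-(\reg_0)})$; hence every uncontrollable transition out of any~$\reg_m$ with $m<k$ leads into~$W^-(\reg_0)$, that is, to a strictly smaller rank. An analogous inspection of the two components of~$\ftPred$, together with Lemma~\ref{lm:key} to forbid stalling on an infinite delay, reduces the relevant instances of the last case to the same situation. If $k=0$ the conclusion is immediate, so assume $k\ge1$ and prove $\reg_m\in\Inf(\rho)$ for $0\le m\le k$ by induction, the base case being the hypothesis $\reg=\reg_0\in\Inf(\rho)$.

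For the inductive step, fix $m<k$ with $\reg_m\in\Inf(\rho)$ and apply the second fairness clause at~$\reg_m$: either its time-successor~$\reg_{m+1}$ lies in~$\Inf(\rho)$, and we are done, or a controllable transition from~$\reg_m$ reaches some $r'\in\Inf(\rho)$. In the latter situation, since~$\rho$ follows~$f$, which delays from~$\reg_m$ towards~$\reg_k$ and never fires that controllable transition, the definition of $\Outcome$ leaves, each time~$\rho$ visits~$\reg_m$, exactly two possibilities: either~$\rho$ performs the whole prescribed delay and thus passes through~$\reg_{m+1}$ — contradicting $\reg_{m+1}\notin\Inf(\rho)$ if this recurs infinitely often — or~$\rho$ fires an uncontrollable transition from inside~$\reg_m$, which by the second structural fact drops the rank strictly below~$r(\reg_0)$. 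Hence, unless the conclusion already holds, all but finitely many visits of~$\rho$ to~$\reg_m$ are followed by a rank-decreasing jump, while~$\rho$ keeps returning to~$\reg_m$, hence to the rank~$r(\reg_0)$, infinitely often.

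Ruling out this residual oscillation of the rank is the step I expect to be the main obstacle. The point is that a rank-lowering strategy never takes a rank-increasing controllable transition, and, by the second structural fact, no uncontrollable transition out of a configuration of rank~$(j,i)$ with $i\ge1$ can raise the rank; therefore every rank increase along~$\rho$ occurs at a configuration of rank~$(j,0)$ with $j\ge1$, and there are only finitely many such ``levels''. A level re-entered infinitely often is, infinitely often, passed through a fixed region from which an output towards~$W^{j-1}_\infty$ is enabled; the first fairness clause then forces this output to be taken infinitely often, so~$\rho$ descends below that level infinitely often. Combined with Lemma~\ref{lm:key} and the repeated-observability of~$\ot$, which prevent~$\rho$ from stalling on an infinite delay, this is incompatible with the rank~$r(\reg_0)$ being re-entered from strictly below infinitely often — unless~$\reg_0$ itself has $i$-component~$0$, a boundary case handled by the same argument with $j$ equal to the first component of~$r(\reg_0)$ and using that the prescribed delay there is the maximal one. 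This closes the induction, and taking $m=k$ yields $\reg'=\reg_k\in\Inf(\rho)$.
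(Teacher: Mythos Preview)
Your proof shares the paper's skeleton---induct along the chain of time-successor regions $\reg_0,\dots,\reg_k$ and show each lies in~$\Inf(\rho)$---but the inductive step you propose is far heavier than needed and drifts into circularity.

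In the paper the step is one line: since $\rho$ is an outcome of~$f$ and $f$ prescribes a pure delay from the current region, no controllable edge is ever played from that region in~$\rho$; the second fairness clause (read operationally, as the paper's gloss ``the implementation will infinitely often let the tester play'' makes explicit) then forces the time-successor region to lie in~$\Inf(\rho)$. Ranks do not enter at all.

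You instead keep open the case where the fairness witness~$\gamma$ is a controllable edge and try to exclude it by a global rank-oscillation argument. That argument is essentially the content of Proposition~\ref{prop:win}: you invoke ``rank increases only at $(j,0)$-configurations'' and then the first fairness clause to force descent, which is precisely what Proposition~\ref{prop:win} establishes \emph{using} this lemma as a building block. You even flag this as the expected obstacle; it is an obstacle because you are reproving the proposition inside its own lemma. The details are also not quite right: a rank-$(j,0)$ configuration lies in $\tPred(W^{j-1}_\infty\cup\Pred_\Sigma(W^{j-1}_\infty))$, where $\Pred_\Sigma$ ranges over \emph{all} actions, so there is no guaranteed ``output towards $W^{j-1}_\infty$'' for the first fairness clause to act on.

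A smaller slip: you assert that $f$ ``delays from $\reg_m$ towards $\reg_k$'', but you only argued determinacy of the target for prefixes ending in~$\reg_0$; if~$\rho$ enters~$\reg_m$ as the endpoint of a discrete step, $f$~is re-evaluated there and could in principle prescribe a different move (for instance if $r(\reg_m)<r(\reg_0)$). The paper states this invariance explicitly (``the strategy after going from~$\nu$ to~$\reg''$ is to delay to~$\reg'$'') rather than leaving it implicit.
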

\begin{proof}
  By definition of a rank-lowering strategy, $\delta$ is a possible
  delay after $\last(\nu)$. Hence there exists $s$ and $\reg'$ such
  that $\nu\xrightarrow{\delta}s\in\reg'$. By~definition of
  rank-lowering strategies, it is always the same $\reg'$ for
  each~$\nu$.
  If $\reg=\reg'$ then we have our result. Otherwise, by~definition of
  outcomes, there is no transition labelled with a controllable
  transition leaving~$\reg$, and by our fairness assumption, there
  exists a (strict) time successor $\reg\xrightarrow{t}\reg''$
  of~$\reg$ such that $\reg''\in \Inf(\rho)$.
  If $\reg''=\reg'$ we have our result; otherwise, by~defintion of
  outcomes, $\reg''$~is a time predecessor of~$\reg'$, and applying the
  same arguments to $\reg''$ will create an induction (as~by
  definition of rank-lowering strategies, the~strategy after going
  from~$\nu$ to $\reg''$ is to delay to~$\reg'$). As~there is only
  finitely-many regions between~$\reg$ and~$\reg'$, we~have our result
  by the induction principle.
\end{proof}

With this lemma, the proof of the main proposition is made easier.
\begin{proof}[Proof (of Prop.~\ref{prop:Awin})]
  Let $\mT = (S,s_0,\Gamma,\mathord\rightarrow_\mT)$ be the timed
  transition system associated with
  $\g=(L,l_0,\Sigma_u\uplus\Sigma_c,X,I,E)$.  Let~$f$ be a
  rank-lowering strategy. We~want to prove that $\Outcome(s_0,f)\cap
  \Fair(\g)\subseteq \Win(\g)$. We~proceed by contradiction.
  
  Suppose there exists an infinite run $\rho\in \Outcome(s_0,f)\cap
  \Fair(\g)$ such that $\rho\notin \Win(\g)$. We denote $r_{\min}$ the
  minimal rank obtained in~$\Inf(\rho)$ and $\reg\in \Inf(\rho)$ such
  that $r(\reg)=r_{\min}$.  For~each prefix~$\nu$ of~$\rho$ ending in
  a configuration~$s=(l,v)\in\reg$, we~let
  $(\delta_\nu,a_\nu)=f(\nu)$. We~consider three cases,
  following the definition of rank-lowering strategies:
  \begin{itemize}
  \item %If it is constructed by the first case, then
    Assume
    $\last(\nu)\in\tPred(\Pred_{\Sigma_c}(W^{-}(\last(\nu))))$ and
    $a_\nu\in\Sigma_c$. By~Lemma~\ref{lm:delay}, there exists
    $\reg'\in\Inf(\rho)$ such that
    $\nu\xrightarrow{\delta_\nu}s_\nu$ and $s_\nu\in \reg'$. Hence
    there exists a transition $e$ such that $\act(e)=a_\nu$. The
    system cannot delay more by definition of $\Outcome(s_0,f)$ and by
    fairness, $\reg'\xrightarrow{e}\reg''$ and $\reg''\in
    W^{-}(\last(\nu))\cap\Inf(\rho)$, which contradicts the
    minimality of $r_{\min}$.
  \item %If it is constructed by the second case, then
    Assume $\last(\nu)\in\tPred(W^{-}(\last(\nu)))$ and
    $a_\nu=\bot$.  By~Lemma~\ref{lm:delay}, there exists
    $\reg'\in\Inf(\rho)$ such that
    $\nu\xrightarrow{\delta_\nu}s_\nu$ and $s_\nu\in \reg'$. By
    definition of the case, $\reg'\in W^{-}(\last(\nu))$, thus the
    minimality of $r_{\min}$ is contradicted.
  \item %Finally, if it is constructed by the third case,
    We finally consider the last case:     as the
    duration in this one is maximal, we~have
    $\last(\nu)\notin\tPred(W^{-}(\last(\nu))
    \cup\Pred_{\Sigma_c}(W^{-}(\last(\nu))))$
    and there are two cases to consider:
    \begin{itemize}
    \item if $(W^{-}(\last(\nu))$ is undefined, then
      $r(\last(\rho))=(0,0)$ and $\rho$ is winning, which is a
      contradiction;
    \item otherwise, by~Proposition~\ref{coverage},
      $\last(\nu)\in\tPred(\Pred_{\Sigma_u}(W^{-}(\last(\nu))))$
      (this corresponds either to $\ftPred$ or to a
      $W^{j+1}_0$). Furthermore $a_\nu=\bot$ and $\delta_\nu$ leads
      to the maximal delay successor region, which we
      call~$\reg'$. By~Lemma~\ref{lm:delay},
      $\reg'\in\Inf(\rho)$. Hence, by~definition of~$\Inf$, all
      regions between $\reg$ and~$\reg'$ are in $\Inf(\rho)$. In
      particular, a~region $\reg''\in\Inf(\rho)$ such that
      $\reg''\xrightarrow{e}\reg'''\in W^{-}(\last(\rho))$ and
      $\act(e)\in\Sigma_u$ exists by definition of the case. Hence by
      fairness, $\reg'''\in\Inf(\rho)$ and the minimality of
      $r_{\min}$ is contradicted.\qed
    \end{itemize}
  \end{itemize}

\end{proof}

% subsection game (end)

\section{Test case properties} % (fold)
\label{sub:test_case_properties}
\propSoundness*

\begin{proof}
Let $\mS$ be a specification, $\mI\in\mI(\mS)$ and $(\g,f)\in\TC(\mS)$. Suppose that $\mI \fails (\g,f)$, we will prove that $\neg(\mI \tioco \mS)$. 

Since  $\mI \fails (\g,f)$, there is a finite run $\rho$ of $\Behaviour(\g,f,I)$ such that $last(\rho)\in \VFail\times S^\mI$ and it is the first
configuration of $\rho$ in this set. Let $\sigma=\Trace(\rho)$. By
construction of \VFail, either $\sigma = \sigma'\cdot \delta$ (if
the configuration of \VFail reached corresponds to a faulty
invariant) or $\sigma = \sigma'\cdot a$ with $a\in \Sigma_!$ (and $\Fail$
is reached). In both cases $\Sout(\mI\after \sigma')\nsubseteq
\Sout(\mdp\after \sigma')$, and by definition
$\neg(I\tioco \mdp)$.

As $\Traces(\mP)=\Traces(\mdp)$ by exact-determinizability hypothesis,
${\neg(I\tioco \mP)}$. Finally, as $\Traces(\mP)=\Traces(\mS)$, we have
$\neg(I\tioco \mS)$, which concludes the proof.
\end{proof}
\begin{remark}
Note that this proof is more general than the property, as it does not
rely on the strategy. It hence proves the property for any
strategy~$f$ and not only for rank-lowering ones. The~key reason lies
in fact in the structure of~$\g$, and ensures that any run reaching
\VFail has the correct form.
\end{remark}

\propStrictness*

\begin{proof}
Let $\mS$ be a specification, $\mI\in\mI(\mS)$ and
$(\g,f)\in\TC(\mS)$. Suppose that
$\neg(\Behaviour(\g,f,\mI)\tioco \mS)$. We want to show that $\mI
\fails (\g,f)$. By definition of
$\neg(\Behaviour(\g,f,\mI)\tioco \mS)$, there exist $\sigma\in
\Traces(\mS)$ and
\[
a\in \Sout(\Behaviour(\g,f,\mI)\after \sigma) \setminus \Sout(\mS\after \sigma)
\]
Since
$\mdp$ is an exact determinization of $\mP$ we have the following
equalities:
$\Traces(\mS)=\Traces(\mP)=\Traces(\mdp)=\Traces_{\conf}(\ot)$.
Since $a\in \mathbb{R}_{\geq0}\cup \Sigma_!$,
$\sigma\cdot a\in \Traces(\ot)$
as~invariants have been
removed, and the automaton has been completed on $\Sigma_!$ with
transitions to $\Fail$). Hence $\sigma\cdot a \in
\Traces(\Ex_{\fail}(\ot))$. Thus, for $\rho\in \Behaviour(\g,f,\mI)$ such
that $\Trace(\rho)=\sigma\cdot a$, $\last(\rho)\in \VFail$ and $\mI
\fails (\g,f)$.
\end{proof}
Note that once again, the properties of the strategy are not used.

\propPrecision*

\begin{proof}
Let $\sigma$ be in $\Traces(\Outcome(s_0^\g,f))$. Then $\g\after \sigma \in
\VPass$ if, and only~if, the run~$\rho$ such that $\Trace(\rho)=\sigma$
(which is unique by determinism of~$\g$ outside~\VFail) is such that
$\last(\rho)\in \VPass$, \ie\ $\rho\in \Ex(\mdp)$ and
$\last(\rho)\in \Accept^\mdp$. Hence $\mdp \after  \sigma \in
\Accept^\mdp$ and as the determinization is exact, $\sigma\in
\Traces(\mP)$ and $\mP\after  \sigma\in \Accept^\mP$, which
gives by definition $ \sigma\in \Traces(\mS)\wedge \TP \after 
\sigma\cap \Accept^\TP \neq \emptyset$.
\end{proof}
The proof uses only properties of the game, and once more does not rely on the precise strategy used.

\propExhaust*

\begin{proof}
Let $\mS$ be a specification, and $\mI\in\mI(\mS)$ a non-conformant
implementation. By definition of $\neg(\mI\tioco \mS)$, there exists
$\sigma \in \Traces(\mS)$ and $a\in\mathbb{R}_{\geq0}\cup\Sigma_!$ such
that $a\in \Sout(\mI\after \sigma)$ and $a\notin
\Sout(\mS\after \sigma)$. As~$\mS$~is repeatedly-observable,
there exists $\delta\in\mathbb{R}_{\geq0}$ and $b\in \Sigma^\mS_{\obs}$
such that $\sigma\cdot \delta\cdot b\in \Traces(\mS)$. Because $\mS$ is also
non-blocking, if $a$ is a delay, we can take $b\in\Sigma_!^\mS$. Indeed,
otherwise there would be no trace controlled by the implementation for any
finite time (say, for time~$a$).

It is possible to build a test purpose $\TP$ that accepts exactly the
trace $\sigma\cdot\delta\cdot b$. It suffices to send every transition
that is not part of this trace to a sink location. As
$\sigma\cdot\delta\cdot b\in \Traces(\mS)$ it is also a trace of the product $\mP=\mS\times\TP$.
As $\mS$ is exactly determinizable and $\TP$ is deterministic, $\mP$ is exactly determinizable
by allowing enough resources to $\mdp$.
We thus obtain $\Traces(\mdp)=\Traces(\mP)$ and $\sigma\cdot \delta\cdot b\in \Traces(\mdp)$.
Hence, the minimal elements of
$\VPass$ are $\ot\after \sigma\cdot\delta\cdot b$.

From $\ot$ a test case $(\g,f)$ can be built, with $f$ a rank-lowering
strategy. By assumption, the implementation is playing fair
runs, hence $f$ is winning. So there exists $\rho\in
\Outcome(s_0^\g,f)$ such that $\Trace(\rho)=\sigma\cdot \delta\cdot b$, and thus
there exists $\rho'\in \Outcome(s_0^\g,f)$ such that
$\Trace(\rho')=\sigma$. By assumption, $\sigma\cdot a\in \Traces(\mI)$, and
depending on the nature of~$a$:
\begin{itemize}
  \item If $a\in\Sigma_!$ then $\sigma\cdot a\in \Outcome(s_0^\g,f)$ as $\g$
    is complete on $\Sigma_!$. Hence $\sigma\cdot a\in \Behaviour(\g,f,\mI)$
    and as $\sigma\cdot a\notin \Traces(\mS)$ and the determinization is
    exact, $\sigma\cdot a\notin \Traces_{\conf}(\ot)$ and $\g \after
    \sigma\cdot a\in \VFail$. Hence $\mI \fails (\g,f)$.
  \item If $a$ is a delay, then $a>\delta$, and $b\in\Sigma_!$. As $b$
    is controlled by the implementation, and there is no invariant in
    $\g$, $\sigma\cdot a\in \Outcome(s_0^\g,f)$. Hence $\sigma\cdot a\in
    \Behaviour(\g,f,\mI)$ and as $\sigma\cdot a\notin \Traces(\mS)$ and the
    determinization is exact, $\sigma\cdot a\notin \Traces_{\conf}(\ot)$ and
    $\g \after  \sigma\cdot a\in \VFail$. Hence $\mI \fails
    (\g,f)$.\qed
\end{itemize}
\end{proof}
% subsection test_case_properties (end)

\fi
\end{document}